\newcommand{\removelatexerror}{\let\@latex@error\@gobble}
\newtheorem{lemma}{Lemma}
\newtheorem{definition}{Definition}
\newtheorem{theorem}{Theorem}
\begin{document}

\title{Efficient Byzantine-Robust Privacy-Preserving Federated Learning via Dimension Compression}

\author{Xian Qin, Xue Yang* and Xiaohu Tang, ~\IEEEmembership{Fellow,~IEEE}

\thanks{Xian Qin, Xue Yang and Xiaohu Tang are with the Information Coding and Transmission Key Laboratory of Sichuan Province, Southwest Jiaotong University, Chengdu 610032, China. (e-mail: \{xueyang, xhutang\}@swjtu.edu.cn, xq@my.swjtu.edu.cn.)}
\thanks{*Corresponding author: Xue~Yang (e-mail: xueyang@swjtu.edu.cn).}}

\markboth{Journal of \LaTeX\ Class Files,~Vol.~14, No.~8, August~2021}%
{Shell \MakeLowercase{\textit{et al.}}: A Sample Article Using IEEEtran.cls for IEEE Journals}

\maketitle
\begin{abstract}
Federated Learning (FL) allows collaborative model training across distributed clients without sharing raw data, thus preserving privacy. However, the system remains vulnerable to privacy leakage from gradient updates and Byzantine attacks from malicious clients. Existing solutions face a critical trade-off among privacy preservation, Byzantine robustness, and computational efficiency. We propose a novel scheme that effectively balances these competing objectives by integrating homomorphic encryption with dimension compression based on the Johnson-Lindenstrauss transformation. Our approach employs a dual-server architecture that enables secure Byzantine defense in the ciphertext domain while dramatically reducing computational overhead through gradient compression. The dimension compression technique preserves the geometric relationships necessary for Byzantine defence while reducing computation complexity from $O(dn)$ to $O(kn)$ cryptographic operations, where $k \ll d$. Extensive experiments across diverse datasets demonstrate that our approach maintains model accuracy comparable to non-private FL while effectively defending against Byzantine clients comprising up to $40\%$ of the network. 
Our approach also demonstrates substantial improvements in computational and communication efficiency. Experimental evaluation shows that the dimension compression technique achieves $25 \times \sim 35 \times$ reduction in computational overhead and $17 \times$ reduction in communication overhead compared to our non-compression version. When compared to state-of-the-art methods like ShieldFL \cite{ShieldFL}, our approach demonstrates order-of-magnitude improvements in both computational and communication efficiency while maintaining equivalent privacy guarantees and achieving superior Byzantine robustness comparable to FLTrust \cite{fltrust}. These substantial efficiency enhancements make secure FL practical for deployment in large-scale neural networks with millions of parameters.
\end{abstract}

\begin{IEEEkeywords}
Federated Learning, Privacy, Byzantine Robust, Homomorphic Encryption, Dimension Compression.
\end{IEEEkeywords}

\section{Introduction}
Federated Learning (FL) \cite{FedAvg} is a distributed machine learning approach that enables collaborative model training across multiple clients without direct data sharing, thereby preserving data privacy. In FL, clients compute local model updates using their private datasets and transmit these updates to a central server. The server aggregates these updates to refine a global model. This paradigm facilitates training on decentralized data sources, such as mobile or edge devices, without compromising sensitive information. Up to now, FL has gained considerable interest for applications in healthcare \cite{medicalsurvey}, finance\cite{FLSurvey-financial}, and the Internet of Things (IoT) \cite{IOT}.

However, FL presents inherent security and privacy challenges. Research \cite{DLG,privacyInference,Membershipinference} indicates that model updates can leak sensitive information about local data, even without direct data sharing. This is especially concerning when dealing with sensitive data in healthcare or finance. Furthermore, the presence of Byzantine clients introduces the risk of poisoning attacks \cite{local_poisoning,manipulating}, where malicious updates can compromise the integrity and reliability of the global model.

To address these privacy concerns, Privacy-Preserving Federated Learning (PPFL) techniques have been developed. These techniques leverage cryptographic primitives such as Homomorphic Encryption (HE) \cite{EMKSA} and Secure Multi-party Computation (SMC) \cite{LightSec} to safeguard local gradients. These methods ensure that individual updates remain confidential throughout the aggregation process, preventing adversaries from inferring sensitive client information. Simultaneously, Byzantine-robust FL methods mitigate the challenge of malicious clients by analyzing statistical properties or similarities between gradient updates to identify and mitigate poisoning attacks. For example, FLTrust \cite{fltrust} employs cosine similarity measurements with a server-side trusted gradient to identify and downweight potentially malicious updates, effectively preserving model integrity even when a subset of clients attempts to poison the training process.

While these individual approaches address either privacy or robustness concerns, several approaches have attempted to combine privacy-preserving techniques with Byzantine defense mechanisms to achieve both privacy and robustness simultaneously. However, these integrated methods often incur unaffordable computational overhead due to cryptographic operations performed in the ciphertext domain \cite{ShieldFL}. Because Byzantine defense algorithms typically require secure computation of vector operations, such as similarity measurements or comparison operations, which become computationally expensive when executed on encrypted data. Consequently, many existing solutions, such as \cite{SAMFL, RFed}, compromise robustness for efficiency by employing simplified defense strategies that avoid computationally intensive operations like L2-norm calculations, resulting in weaker security guarantees against sophisticated attacks. Other approaches compromise the privacy of local gradients by adding the same mask to the entire gradient vector, which exposes the shifted distribution of the gradients to the server \cite{LSFL,SplitAgg}.

In conclusion, existing solutions often face a trade-off between computational efficiency and security guarantees, leading to either compromised privacy or weakened defense mechanisms.

To address the critical trade-off between privacy, robustness, and efficiency in FL, in this paper, we propose a novel scheme that simultaneously achieves Byzantine robustness and privacy preservation without sacrificing computational efficiency. Our approach uniquely combines dimension compression based on the Johnson-Lindenstrauss (JL) transformation with additive masking and homomorphic encryption, enabling secure Byzantine defense in the encrypted domain while reducing computation complexity. As a result, our scheme maintains Byzantine robustness comparable to plaintext FLTrust \cite{fltrust} and privacy guarantees comparable to state-of-the-art approaches like ShieldFL \cite{ShieldFL}. The main contributions of our work are summarized as follows:
\begin{itemize}
  \item We propose a novel scheme that achieves privacy-preserving Byzantine defense through strategic integration of additive masking and homomorphic encryption. 
    Unlike ShieldFL's two-trapdoor homomorphic encryption approach, our method employs lightweight additive masking for client-side gradient protection, eliminating the need for clients to perform complex cryptographic operations. This design significantly reduces computational and communication overhead on resource-constrained edge devices while maintaining strong privacy guarantees.
    On the server-side, by applying Paillier homomorphic encryption only to the Byzantine defense mechanism rather than the entire aggregation process, we significantly reduce computational and communication overhead.
    Our design further enhances efficiency by enabling complex cryptographic operations to occur offline since they are performed only on the masks, which remain independent of the local training process.
    This design maintains equivalent privacy guarantees to ShieldFL \cite{ShieldFL} for gradient confidentiality throughout both Byzantine defense and aggregation processes while substantially reducing both computational and communication overhead.

    \item For robust Byzantine defense, we implement a secure and efficient version of cosine similarity in the ciphertext domain based on FLTrust \cite{fltrust}, which provides stronger robustness against sophisticated attacks compared to ShieldFL \cite{ShieldFL}.
    To overcome the computation bottleneck of FLTrust \cite{fltrust} in the ciphertext domain, we introduce dimension compression based on the JL transformation. Our approach projects high-dimensional gradients into a substantially lower-dimensional space ($k \ll d$) while preserving critical geometric relationships with provable error bounds. This technique dramatically reduces cryptographic computation complexity from $O(dn)$ to $O(kn)$, enabling practical deployment for large-scale neural networks. Experiments demonstrate that our method maintains robust Byzantine defense capabilities even at extreme compression ratios (up to $10,000 \times$ dimension reduction), while preserving model accuracy by applying compression selectively—only during the defense phase rather than the aggregation process. This strategic application of compression makes privacy-preserving Byzantine-robust FL feasible for modern deep neural networks with millions of parameters, addressing a critical scalability challenge in secure FL.
    
    \item We provide formal privacy guarantees through rigorous security analysis ensuring client gradient confidentiality. Our experiments across diverse datasets and attack scenarios demonstrate model accuracy comparable to non-private FL while defending against Byzantine clients comprising up to 40\% of the network. Through both theoretical analysis and experimental evaluation, we show that our scheme significantly outperforms state-of-the-art Byzantine-robust PPFL methods in terms of efficiency. Our non-compression version reduces computational overhead by over $500 \times$ and communication overhead by $6 \times$ compared to ShieldFL. Our compression technique further reduces computational overhead by $25 \sim 35 \times$ and communication overhead by $17 \times$ compared to our non-compression version, while preserving defense capabilities even at extreme compression ratios of $0.0001$, making privacy-preserving Byzantine-robust FL practical for neural networks with millions of parameters.
\end{itemize}

\section{Related Work} \label{sec:related_work}

Federated Learning (FL) \cite{FedAvg} enables collaborative model training without sharing raw data, though research shows model updates can still leak sensitive information \cite{DLG,Membershipinference,privacyInference}. To address this, Privacy-Preserving FL (PPFL) employs cryptographic techniques like Homomorphic Encryption (HE) \cite{EMKSA} and Secure Multi-party Computation (SMC) \cite{LightSec} to secure the aggregation process.

Simultaneously, Byzantine-robust FL methods defend against malicious clients using statistical filtering \cite{trimmedmean} or similarity-based approaches \cite{krum}. For instance, Krum \cite{krum} selects updates based on Euclidean distance, while FLTrust \cite{fltrust} employs norm and cosine similarity with a trusted gradient to identify and downweight malicious contributions.

Combining privacy preservation with Byzantine robustness presents significant challenges, particularly when performing Byzantine defense in the ciphertext domain. Existing solutions typically face a three-way trade-off between privacy, robustness, and computational efficiency.

Several approaches prioritize security but suffer from computational inefficiency. PEFL \cite{PEFL} uses a dual-server architecture with Paillier encryption and Pearson correlation coefficients for Byzantine defense, but requires shared encryption keys, creating vulnerability to collusion. ShieldFL \cite{ShieldFL} addresses collusion concerns by combining two trapdoor homomorphic encryption schemes, but significantly increases computational overhead due to complex cryptographic operations.

Other methods compromise robustness for efficiency. Computing norm values—critical for Byzantine defense—requires expensive homomorphic multiplication in the ciphertext domain, which creates a significant computational bottleneck in the ciphertext domain. SAMFL \cite{SAMFL} avoids this by allowing clients to compute norms locally before sending them to the server, creating vulnerabilities where malicious participants can manipulate norm values to circumvent defense mechanisms. Similarly, DefendFL \cite{DefendFL} and RFed \cite{RFed} rely solely on cosine similarity and inner products respectively, making them vulnerable to model poisoning attacks that exploit the absence of norm-based constraints. Li et al. \cite{Li2024} attempt to reduce computational burden by randomly sampling gradient dimensions, but remain vulnerable to attacks targeting the excluded dimensions.

Some approaches sacrifice privacy for efficiency. LSFL \cite{LSFL} employs additive secret sharing with Krum \cite{krum} for Byzantine defense, but later analysis \cite{LSFL-erratum} revealed this approach exposes gradients with shifted distributions. Split Aggregation \cite{SplitAgg} faces similar privacy vulnerabilities, as also identified in PEFL \cite{PEFL-erratum}.

In summary, existing approaches typically compromise either privacy, robustness, or efficiency. Our proposed dimension compression technique offers a promising direction to overcome these limitations and achieve balanced performance across all three dimensions.

\section{Preliminaries} \label{sec:foundational_concepts}
This section introduces the foundational concepts of FL, cryptographic primitives, and the JL transformation employed in our scheme.

\subsection{Federated Learning}
FL enables multiple clients to collaboratively train a machine learning model without directly sharing their local data. In FL, each client $\mathcal{C}_i$ possesses a private dataset $\mathcal{D}_i$ and computes local gradients using the current global model parameters. The server then aggregates these local updates to refine the global model. The standard FedAvg algorithm \cite{FedAvg} operates as follows:
\begin{enumerate}
  \item Each client computes local gradient: $\boldsymbol{g}_i^t=\nabla \sum_{(x,y)\in \mathcal{D}_i}l(f(x,\boldsymbol{W}^t), y)$
  \item Server aggregates gradients: $\boldsymbol{\bar{g}}^t=\frac{1}{n}\sum_{i=1}^{n}\boldsymbol{g}_i^t$
  \item Clients update model: $\boldsymbol{W}^{t+1}=\boldsymbol{W}^{t}-\eta\cdot\boldsymbol{\bar{g}}^t$
\end{enumerate}
\subsection{Additive Homomorphic Encryption}
Homomorphic Encryption (HE) \cite{HE, HEsurvey} enables computations on encrypted data without decryption. In this paper, we use the Paillier cryptosystem \cite{Paillier}, which provides additive homomorphic properties. The scheme consists of:
\begin{itemize}
  \item $\textbf{KeyGen}(\kappa) \rightarrow (pk,sk)$: Given security parameter $\kappa$, generates public key $pk = (N, N+1)$ and private key $sk = \lambda$, where $N = p \cdot q$ is the product of two large primes and $\lambda = \text{lcm}(p-1, q-1)$.
  
  \item $\textbf{Enc}(m_i, pk)\rightarrow c_i$: Encrypts plaintext $m_i \in \mathbb{Z}_N$ as:
  $$c_i = (N+1)^{m_i} \cdot r^N \bmod N^2, r \in \mathbb{Z}_N^*.$$
  
  \item $\textbf{Dec}(c_i, sk)\rightarrow m_i$: Decrypts ciphertext $c_i$ as:
  $$m_i = L(c_i^\lambda \bmod N^2) \cdot \lambda^{-1} \bmod N$$ 
  where $L(u) = \frac{u-1}{N}$.
  
  \item $\textbf{Add}(\{c_i\}_{i \leq n},  pk) \rightarrow c_{add}$: Performs homomorphic addition:
  $$c_{add} = \prod_{i \leq n}{c_i} \bmod N^2 = \text{Enc}(\sum_{i \leq n}{m_i}, pk).$$
  
  \item $\textbf{ScalarMul}(c_i, k, pk) \rightarrow c'$: Performs homomorphic scalar multiplication:
  $$c'_i = c_i^k \bmod N^2 = \text{Enc}(k \cdot m_i, pk).$$
\end{itemize}

\subsection{Johnson-Lindenstrauss Transformation}
The JL transformation is a dimension-reduction technique that projects high-dimensional vectors into a significantly lower-dimensional space while probabilistically preserving pairwise Euclidean distances within a small distortion bound.
This random linear mapping allows for substantial dimensionality reduction with provable error bounds, making it particularly valuable for computational efficiency in large-scale machine learning applications.
This random linear mapping $f: \mathbb{R}^d \rightarrow \mathbb{R}^k$ (where $k \ll d$) is typically implemented using a random matrix $\boldsymbol{R} \in \mathbb{R}^{k \times d}$ with entries drawn from a Gaussian distribution or other suitable random distributions.
The theoretical foundation for this transformation is the Johnson-Lindenstrauss Lemma \cite{JL-lemma}, which states:
\begin{lemma}[Johnson-Lindenstrauss (JL) Lemma] \label{lem:JL}
Given two points $x, y \in \mathbb{R}^d$ and any $0 < \epsilon < 1$, if $f: \mathbb{R}^d \rightarrow \mathbb{R}^k$ is a random linear mapping with $k = O(\frac{\log(1/\delta)}{\epsilon^2})$, then with probability at least $1-\delta$:
$$(1-\epsilon)\Vert x - y \Vert^2 \leq \Vert f(x) - f(y) \Vert^2 \leq (1+\epsilon)\Vert x - y \Vert^2$$
\end{lemma}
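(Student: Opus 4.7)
The plan is to exploit linearity to reduce the two-point statement to a single-vector concentration inequality and then apply a chi-squared tail bound. Since $f$ is linear, $f(x)-f(y)=f(x-y)$, so it suffices to prove that for any fixed $u\in\mathbb{R}^d$,
$$\Pr\bigl[(1-\epsilon)\|u\|^2 \le \|f(u)\|^2 \le (1+\epsilon)\|u\|^2\bigr] \ge 1-\delta,$$
and then specialize to $u=x-y$. By further normalizing, we may assume $\|u\|=1$; the claim then becomes $\bigl|\|f(u)\|^2-1\bigr|\le \epsilon$ with the desired probability.

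Next I would fix the realization of $f$: take $\boldsymbol{R}\in\mathbb{R}^{k\times d}$ with i.i.d.\ entries $R_{ij}\sim\mathcal{N}(0,1/k)$ and set $f(u)=\boldsymbol{R}u$. For each row $i$, the coordinate $(\boldsymbol{R}u)_i$ is a linear combination of independent Gaussians and therefore distributed as $\mathcal{N}(0,\|u\|^2/k)=\mathcal{N}(0,1/k)$. Writing $Z_i=\sqrt{k}(\boldsymbol{R}u)_i$, the variables $Z_1,\dots,Z_k$ are i.i.d.\ standard normals and
$$\|f(u)\|^2 \;=\; \frac{1}{k}\sum_{i=1}^{k} Z_i^2,$$
so $k\|f(u)\|^2\sim\chi^2_k$. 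In particular $\mathbb{E}\|f(u)\|^2=1$, which is the correct centering for the desired inequality.

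The quantitative step is a two-sided Chernoff/Bernstein bound for chi-squared variables. Using the moment generating function $\mathbb{E}[e^{tZ_i^2}]=(1-2t)^{-1/2}$ for $t<1/2$ and optimizing, one obtains the standard estimates
$$\Pr\!\left[\tfrac{1}{k}\sum_{i=1}^{k} Z_i^2 \ge 1+\epsilon\right] \le \exp\!\bigl(-\tfrac{k}{4}(\epsilon^2-\epsilon^3)\bigr), \qquad \Pr\!\left[\tfrac{1}{k}\sum_{i=1}^{k} Z_i^2 \le 1-\epsilon\right] \le \exp\!\bigl(-\tfrac{k}{4}(\epsilon^2-\epsilon^3)\bigr).$$
Applying a union bound and using $\epsilon<1$ to absorb the cubic term into a constant multiple of $\epsilon^2$, the failure probability is bounded by $2\exp(-c\,k\epsilon^2)$ for an absolute constant $c>0$. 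Choosing $k\ge c^{-1}\epsilon^{-2}\log(2/\delta)=O(\log(1/\delta)/\epsilon^2)$ makes this at most $\delta$, which yields the lemma after rescaling by $\|u\|$ and substituting $u=x-y$.

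The main obstacle is the sharp concentration step for the chi-squared variable: one has to compute the moment generating function in the correct regime, choose the optimal tilt $t$, and verify that the resulting exponent behaves like $k\epsilon^2$ rather than $k\epsilon$ once $\epsilon$ is small. The linearity reduction and the Gaussian distributional computations are routine, but obtaining the $\log(1/\delta)/\epsilon^2$ scaling (as opposed to a looser $1/\epsilon$ scaling from naive Markov/Chebyshev arguments) hinges on this exponential concentration, and this is where the real work of the proof lies.
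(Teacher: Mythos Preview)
Your argument is the standard Gaussian-projection proof of the JL lemma and is correct: the linearity reduction to a single unit vector, the identification of $k\|f(u)\|^2$ with a $\chi^2_k$ random variable, and the Chernoff-type tail bound yielding the $\epsilon^{-2}\log(1/\delta)$ dependence are all sound. The only nitpick is that the lemma as stated in the paper allows ``entries drawn from a Gaussian distribution or other suitable random distributions,'' whereas your proof fixes the Gaussian case; but that is the canonical instantiation and suffices to justify the $O(\log(1/\delta)/\epsilon^2)$ claim.

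However, there is nothing to compare against: the paper does not prove Lemma~\ref{lem:JL}. It is presented in the Preliminaries section as a cited result (the citation \cite{JL-lemma}) and is used as a black box in Section~\ref{sec:analysis_jl} to bound the distortion of compressed norms and cosine similarities. So your write-up supplies a proof where the paper simply defers to the literature; it neither matches nor differs from the paper's approach, because the paper has none.
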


In practice, the dimension compression works effectively when 
$$k \geq \frac{4+2\ln(1/\delta)}{\epsilon^2}$$
which is sufficient to ensure that, with probability at least $1-\delta$,
distances between $x$ and $y$ are preserved up to a factor of $(1 \pm \epsilon)$.
The JL transformation is widely used in dimension compression, compressed sensing, and machine learning to reduce the computation complexity of algorithms and improve the efficiency of data processing.

\section{Models and Design Goals} \label{sec:system_model}
In this section, we present the formal model for our Byzantine-robust PPFL system. We begin by introducing the system architecture and key components. Subsequently, we outline the threat model, which characterizes both the Byzantine attack capabilities and privacy vulnerabilities. Finally, we formalize our security objectives and design requirements that guide the development of our proposed scheme.
\subsection{System Model}
Similar to prior works \cite{LSFL, PEFL, RFed, ShieldFL}, we consider a dual-server architecture. The system consists of two non-colluding servers (denoted as $\mathcal{S}_0$ and $\mathcal{S}_1$) and a set of $n$ clients (denoted as $\mathcal{C} = \{\mathcal{C}_1, \mathcal{C}_2,..., \mathcal{C}_n\}$). Two servers collaboratively manage the training process in a privacy-preserving manner. Each client $\mathcal{C}_i \in \mathcal{C}$ possesses a private dataset $\mathcal{D}_i$, which contributes to training a global model with $d$ parameters. As shown in Fig \ref{fig:architecture}, these types of entities are responsible for the following operations:
\begin{itemize}
  \item Client: Each client $\mathcal{C}_i$ is responsible for training the model with its private dataset $\mathcal{D}_i$. Upon receiving the global gradient $\boldsymbol{g}_{\text{global}}$ from the server $\mathcal{S}_0$, the client updates its local model with $\boldsymbol{g}_{\text{global}}$ and performs local training on $\mathcal{D}_i$ to compute the local gradient $\boldsymbol{g}_i$. $\mathcal{C}_i$ then sends the local gradient $\boldsymbol{g}_i$ to the server for defence and aggregation.
  \item Server $\mathcal{S}_0$: $\mathcal{S}_0$ is responsible for aggregating the local updates from all clients with the generated aggregation weights from $\mathcal{S}_1$ to obtain the global gradient $\boldsymbol{g}_{\text{global}}$. After that, the server broadcasts $\boldsymbol{g}_{\text{global}}$ to the clients for the next iteration.
  \item Server $\mathcal{S}_1$: $\mathcal{S}_1$ is responsible for performing the Byzantine defence for the collected local gradients and deciding the aggregation weight for each update. $\mathcal{S}_1$ calculates these weights with the help of $\mathcal{S}_0$, where it can learn only the cosine similarity and norm of local gradients but cannot access any other information about the individual gradient. It then sends these weights to $\mathcal{S}_0$ for the aggregation process.
\end{itemize}

\subsection{Threat Model} \label{sec:threat_model}
In this work, we consider two types of entities: the dual servers ($\mathcal{S}_0$ and $\mathcal{S}_1$) and the clients. 
Our threat model considers the following types of participants and their behaviors:
\begin{itemize}
  \item \textbf{Semi-honest servers:} Both $\mathcal{S}_0$ and $\mathcal{S}_1$ correctly follow the protocol but may attempt to infer clients' private gradients out of curiosity. Following previous work \cite{LSFL, PEFL, RFed, ShieldFL}, we assume the two servers will not collude with each other, though either server may collude with semi-honest clients to infer other clients' local gradients. The communication channels between all participants are assumed to be secure against external adversaries who might intercept or inject messages.
  \item \textbf{Semi-honest clients:} These clients correctly follow the protocol but may attempt to infer other clients' local gradients out of curiosity. Semi-honest clients may collude with other semi-honest clients or with one of the semi-honest servers to infer others' local gradients. This inference is concerning because local gradients can be exploited for reconstruction attacks \cite{DLG}.
  \item \textbf{Malicious clients:} These clients actively seek to undermine the learning process by injecting harmful data through model poisoning or data poisoning attacks to compromise the accuracy and reliability of the global model. Malicious clients may upload carefully crafted gradients to influence the model's convergence direction. They may also collude with each other to amplify the impact of their attacks.
\end{itemize}

Note that clients can simultaneously be both semi-honest and malicious—that is, a client may correctly follow the protocol to avoid detection while attempting to both disrupt the learning process and/or infer private information from other clients.

\subsection{Design Goals} \label{sec:design_goals}
Based on our system and threat models, we aim to develop an efficient Byzantine-robust PPFL scheme that achieves the following objectives:
\begin{itemize}
  \item \textbf{Model Quality}: The scheme must maintain model accuracy comparable to standard FedAvg \cite{FedAvg} in benign settings. This ensures that our defensive mechanisms do not compromise the learning performance when no attacks are present.
  
  \item \textbf{Robustness}: The system must effectively defend against poisoning attacks by reliably detecting malicious clients and minimizing their influence on the global model aggregation process.
  
  \item \textbf{Privacy}: The scheme must provide strong privacy guarantees: clients' model updates should not be exposed to any potentially curious servers or other clients.
  
   \item \textbf{Efficiency}: The scheme must minimize both communication and computational overhead, specifically:
    \begin{itemize}
    \item Reduce client-server and inter-server communication costs
    \item Minimize computation burden on clients and servers
    \item Scale efficiently to large model architectures
    \end{itemize}
\end{itemize}

\section{Method} \label{sec:methodology}
In this section, we present our efficient Byzantine-robust PPFL scheme. 
Our approach employs a hybrid cryptographic design combining lightweight additive masking with homomorphic encryption in a dual-server architecture. For Byzantine defense, we implement FLTrust \cite{fltrust} in a privacy-preserving manner. To overcome the computation bottleneck of performing complex operations in the ciphertext domain, we introduce JL transformation-based dimension compression.

Our hybrid approach strategically distributes computation between additive secret sharing (for gradient aggregation) and homomorphic encryption (for Byzantine defense). This design significantly reduces efficiency challenges compared to schemes like ShieldFL \cite{ShieldFL}, while maintaining strong security guarantees. The dimension compression projects high-dimensional gradients into a lower-dimensional space before applying homomorphic encryption operations, as illustrated in Figure \ref{fig:architecture}. This technique dramatically reduces computational overhead while preserving the essential geometric relationships required for Byzantine defense. Since compression is applied exclusively to Byzantine defense calculations and not to gradient aggregation, it preserves model update integrity while significantly reducing computational costs. Algorithm \ref{alg:workflow} presents our proposed scheme, addressing all design goals outlined in Section \ref{sec:design_goals} by providing a practical solution that minimizes communication and computational overhead without compromising security or model performance.

Below, we present a detailed description of our proposed solution, including the system architecture and the step-by-step protocol workflow.

\begin{figure*}[!t]
  \centering
  \includegraphics[width=0.8\textwidth]{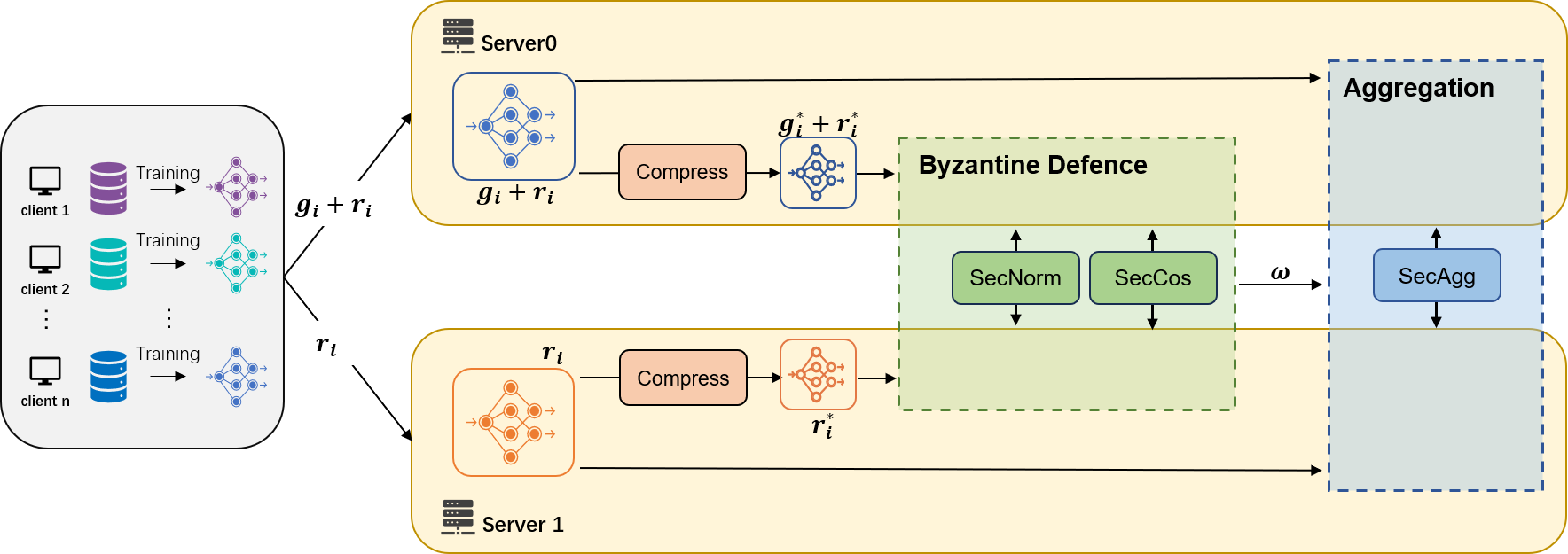}
  \caption{System Architecture of Byzantine-Robust Privacy-Preserving Federated Learning.}
  \label{fig:architecture}
  \end{figure*}

\subsection{Overview}
Our system architecture consists of two non-colluding servers ($\mathcal{S}_0$ and $\mathcal{S}_1$) and $n$ clients $\mathcal{C} = \{\mathcal{C}_1, \mathcal{C}_2,..., \mathcal{C}_n\}$, as illustrated in Fig. \ref{fig:architecture}. Each $\mathcal{C}_i$ holds a private dataset $\mathcal{D}_i$, while $\mathcal{S}_1$ maintains a small trusted dataset $\mathcal{D}_s$ to establish a baseline for Byzantine defence. 
Algorithm \ref{alg:workflow} outlines our protocol into five steps:
\begin{enumerate}
  \item \textbf{Initialization}: $\mathcal{S}_0$ generates a random projection matrix $R \in \mathbb{R}^{k \times d}$ and initializes the model parameters $\boldsymbol{W}^0$. $\mathcal{S}_1$ generates a Paillier key pair $(pk, sk)$ and sends $pk$ to $\mathcal{S}_0$. Each $\mathcal{C}_i$ downloads $\boldsymbol{W}^0$ from $\mathcal{S}_0$. To enable efficient communication during training, clients generate random seeds $s_i$ and share them with $\mathcal{S}_1$. Using these seeds, both $\mathcal{C}_i$ and $\mathcal{S}_1$ can independently generate identical random masks for each round without additional communication. Furthermore, $\mathcal{S}_1$ precomputes the compressed mask projections and their encrypted values offline before training begins. This offline preprocessing significantly reduces computational overhead during the actual training rounds, as these expensive cryptographic operations do not need to be repeated in each iteration.
  \item \textbf{Local Training}: In each training round $t$, each client $\mathcal{C}_i$ computes its local gradient $\boldsymbol{g}_i^t$ using the current model $\boldsymbol{W}^t$ and its private dataset $\mathcal{D}_i$. To protect privacy, each client using generated random mask $\boldsymbol{r}_i^t$ and sends the masked gradient $\boldsymbol{g}_i^t + \boldsymbol{r}_i^t$ to $\mathcal{S}_0$.
  \item \textbf{Poisoning Defence}: $\mathcal{S}_1$ computes a reference gradient $\boldsymbol{g}_{\text{standard}}^t$ using its trusted dataset $\mathcal{D}_s$. Subsequently, the two servers collaboratively perform secure computation of gradient norms $\Vert \boldsymbol{g}_i^t \Vert$ and cosine similarities $\text{cos}_i^t$ between each client gradient and the reference gradient using Algorithms \ref{alg:defence} (SecNorm) and \ref{alg:SecCos} (SecCos). These measurements enable detection of potentially malicious updates by identifying gradients with suspicious geometric properties while maintaining the confidentiality of individual client contributions.
  \item \textbf{Secure Aggregation}: Based on the calculated gradient norms $\Vert \boldsymbol{g}_i^t \Vert$ and cosine similarities $\text{cos}_i^t$, $\mathcal{S}_1$ assigns aggregation weights to each client update. Local gradients with higher similarity to the reference gradient $\boldsymbol{g}_{\text{standard}}^t$ receive larger weights, effectively reducing the influence of potentially malicious updates. The dual servers then collaboratively perform the weighted aggregation of local gradients in a privacy-preserving manner, ensuring that neither servers learn individual client contributions.
  \item \textbf{Update}: $\mathcal{S}_0$ broadcasts $\boldsymbol{g}_{\text{global}}^t$ to all clients, who then update their local models with the global gradient to obtain $\boldsymbol{W}^{t+1}$.
\end{enumerate}
Steps 2 to 5 are repeated for $T$ rounds until the global model converges. The detailed algorithms for each step are presented in the following sections.

\begin{algorithm}
  \caption{Byzantine-Robust Privacy-Preserving Federated Learning}\label{alg:workflow}
  \DontPrintSemicolon
  \KwIn{Security parameter $\kappa_1$; Clients $\mathcal{C} = \{\mathcal{C}_1, \mathcal{C}_2,..., \mathcal{C}_n\}$ with private datasets $\mathcal{D}_i$; Server $\mathcal{S}_1$ with standard dataset $\mathcal{D}_s$}
  \KwOut{Global model parameters $\boldsymbol{W}$}

  \tcc{Initialization}
  $\mathcal{S}_0$, $\mathcal{S}_1$, and all $\mathcal{C}_i \in \mathcal{C}$ collaboratively execute: $\textbf{Initialization}(\kappa_1, \kappa_2, k, d, T)$ 
  \tcp{Algorithm~\ref{alg:init}}

  \tcc{Local Training}
  \For{$t=0$ \KwTo $T$}{
    \For{$i=1$ \KwTo $n$ in parallel}{
      $\mathcal{C}_i$ computes local gradient $\boldsymbol{g}_i^t$ using $\boldsymbol{W}^t$ and $\mathcal{D}_i$\;
      
      $\mathcal{C}_i$ sends masked gradient $\boldsymbol{g}_i^t + \boldsymbol{r}_i^t$ to $\mathcal{S}_0$\;

    }

    \tcc{Poisoning Defence}
    $\mathcal{S}_1$ computes reference gradient $\boldsymbol{g}_{standard}^t$ using $\mathcal{D}_s$\;

    \For{$i=1$ \KwTo $n$}{
  
      $\mathcal{S}_0$ computes the compressed gradient:\\ 
       $\boldsymbol{g}_i^{t*} + \boldsymbol{r}_i^{t*} = \boldsymbol{R} \cdot (\boldsymbol{g}_i^t + \boldsymbol{r}_i^t)$\;
      $\Vert \boldsymbol{g}_i^t \Vert \approx \textbf{SecNorm}(\boldsymbol{g}_i^{t*} + \boldsymbol{r}_i^{t*}, \boldsymbol{c}_i^*)$
      \tcp{Algorithm \ref{alg:defence}} 

      $\text{cos}_i^t = \textbf{SecCos}(\boldsymbol{g}_i^t + \boldsymbol{r}_i^t, \boldsymbol{r}_i^t, \boldsymbol{g}_{standard}^t, \Vert \boldsymbol{g}_i^t \Vert)$ \tcp{Algorithm \ref{alg:SecCos}}
    }

    \tcc{Secure Aggregation}
    $\boldsymbol{g}_{\text{global}}^t = \textbf{SecAgg}(\{\boldsymbol{g}_i^t + \boldsymbol{r}_i^t\}, \{\boldsymbol{r}_i^t\}, \{\text{cos}_i^t\}, \{\Vert \boldsymbol{g}_i^t \Vert\})$ \tcp*{Algorithm \ref{alg:SecAgg}}

    \tcc{Update}
    $\mathcal{S}_0$ broadcasts $\boldsymbol{g}_{\text{global}}^t$ to all clients\;
    \For{$i=1$ \KwTo $n$ in parallel}{
      $\mathcal{C}_i$ updates local model: $\boldsymbol{W}_i^{t+1} = \boldsymbol{W}^{t} - \eta \cdot \boldsymbol{g}_{\text{global}}^t$\;
    }
  }
\end{algorithm}

\begin{algorithm}
  \caption{$\text{Initialization}(\kappa_1,\kappa_2,k,d,T)$}\label{alg:init}
  \DontPrintSemicolon
  \KwIn{Security parameters $\kappa_1$, $\kappa_2$; Projection dimension $k$; Model dimension $d$; Number of clients $n$; Number of rounds $T$}
  \KwOut{Initial model parameters $\boldsymbol{W}^0$; Compression matrix $\boldsymbol{R}$; Key pair $(pk, sk)$; Seeds $\{s_i\}_{i=1}^n$}

  $\mathcal{S}_0$ generates random projection matrix $\boldsymbol{R} \in \mathbb{R}^{k \times d}$ and initializes model parameters $\boldsymbol{W}^0$\;
  $\mathcal{S}_0$ generates modulus $q$ with $\vert q \vert = \kappa_2$\;
  $\mathcal{S}_0$ sends $\boldsymbol{R}$ to $\mathcal{S}_1$ and broadcasts $\boldsymbol{W}^0$ and $q$ to all clients\;

  $\mathcal{S}_1$ generates Paillier key pair $(pk, sk)$ with security parameter $\kappa_1$ and sends $pk$ to $\mathcal{S}_0$\;
  
  \tcc{Preprocessing for Efficiency}
  \For{$i=1$ \KwTo $n$ in parallel}{
    $\mathcal{C}_i$ generates random seed $s_i$ and send it to $\mathcal{S}_1$\;

  \For{$t=0$ \KwTo $T$}{

      $\mathcal{S}_1$  and $\mathcal{C}_i$ both generates identical mask using pseudorandom generation: $\boldsymbol{r}_i^t = G(s_i, t) \bmod q \in \mathbb{R}^d$\;

    $\mathcal{S}_1$ computes compressed mask $\boldsymbol{r}_i^{t*} = \boldsymbol{R} \cdot \boldsymbol{r}_i^t \in \mathbb{R}^k$\;
    $\mathcal{S}_1$ encrypts each component of the compressed mask:
      $\boldsymbol{c}^{t*}_i = [\textbf{Enc}(r^{t*}_{i,0}, pk), \ldots, \textbf{Enc}(r^{t*}_{i,k-1}, pk)]$\;
      $\mathcal{S}_1$ sends $\boldsymbol{c}^{t*}_i$ to $\mathcal{S}_0$ for later use\;

  }}
  \KwRet{$\boldsymbol{W}^0$, $\boldsymbol{R}$, $(pk, sk)$, $\{\boldsymbol{r}_i\}_{i=1}^n$, $\boldsymbol{c}^{t*}_i$}
\end{algorithm}

\subsection{Initialization}
The initialization phase prepares the system for secure and efficient FL, as shown in Algorithm \ref{alg:init}, which consists of system setup and computation preprocessing.

In system setup, $\mathcal{S}_0$ generates a random matrix $\boldsymbol{R} \in \mathbb{R}^{k \times d}$ (where $k \ll d$) for dimension compression and initializes model parameters $\boldsymbol{W}^0$. $\mathcal{S}_1$ generates a Paillier key pair $(pk, sk)$. Then, $\mathcal{S}_0$ sends $\boldsymbol{W}^0$ to all clients, and $\mathcal{S}_1$ shares $pk$ with $\mathcal{S}_0$.

For computational efficiency, we implement preprocessing (lines 5-12 in Algorithm \ref{alg:init}). Each client $\mathcal{C}_i$ generates a random seed $s_i$ and sends it to $\mathcal{S}_1$. Using these seeds, both $\mathcal{C}_i$ and $\mathcal{S}_1$ can generate identical masks $\boldsymbol{r}_i^t$ independently for each round $t$ using a secure pseudorandom number generator (PRNG) function $G(s_i, t) \bmod q$, where $q$ is a public modulus with bit length $\kappa_2$. 

This approach eliminates the need for additional communication during training.

With these masks, $\mathcal{S}_1$ precomputes compressed masks $\boldsymbol{r}_i^{t*} = \boldsymbol{R} \cdot \boldsymbol{r}_i^t$ and their encrypted forms $\boldsymbol{c}^{t*}_i$ before training begins. Since these computationally expensive operations are independent of training data, performing them offline before training begins significantly improves runtime efficiency.

\subsection{Local Training}
In each training round $t$, each client $\mathcal{C}_i$ locally computes its gradient $\boldsymbol{g}_i^t$ using the current model parameters $\boldsymbol{W}^t$ and its private dataset $\mathcal{D}_i$ by performing stochastic gradient descent (SGD):
$$\boldsymbol{g}_i^t=\nabla \sum_{(x,y)\in \mathcal{D}_i}l(f(x,\boldsymbol{W}^t), y)$$

To preserve privacy, each $\mathcal{C}_i$ masks its local gradient $\boldsymbol{g}_i^t$ with the random vector $\boldsymbol{r}_i^t$ generated in the initialization phase. $\mathcal{C}_i$ then sends only the masked gradient $\boldsymbol{g}_i^t + \boldsymbol{r}_i^t$ to $\mathcal{S}_0$, while $\mathcal{S}_1$ already possesses $\boldsymbol{r}_i^t$ from the preprocessing step.
This additive masking scheme creates a form of secret sharing where neither server can independently reconstruct the client's true gradient. By distributing knowledge of the gradient components ($\boldsymbol{g}_i^t + \boldsymbol{r}_i^t$ to $\mathcal{S}_0$ and $\boldsymbol{r}_i^t$ to $\mathcal{S}_1$), we achieve privacy guarantees provided the two servers remain non-colluding. This approach significantly reduces the computation burden on resource-constrained client devices compared to ShieldFL \cite{ShieldFL}, requiring only lightweight addition operations rather than computationally expensive cryptographic operations while maintaining strong privacy protection.

\subsection{Poisoning Defence}
To defend against Byzantine attacks, we implement a privacy-preserving version of FLTrust \cite{fltrust}, which uses cosine similarity and gradient normalization to identify and mitigate malicious updates. Our approach securely computes these metrics without exposing raw client gradients to either server.

The defense process consists of three key steps:
\begin{enumerate}
   \item \textbf{Reference Gradient Generation} (line 7): $\mathcal{S}_1$ computes a reference gradient $\boldsymbol{g}_{\text{standard}}^t$ using its trusted dataset $\mathcal{D}_s$. This reference serves as a baseline for detecting Byzantine behavior.
    
    \item \textbf{Dimension Compression} (line 10): To enable efficient computation while preserving Byzantine robustness, $\mathcal{S}_0$ applies dimension compression to the masked gradients using the projection matrix $\boldsymbol{R}$, computing $\boldsymbol{g}_i^{t*} + \boldsymbol{r}_i^{t*} = \boldsymbol{R} \cdot (\boldsymbol{g}_i^t + \boldsymbol{r}_i^t)$. 
      
      \item \textbf{Secure Metric Computation} (lines 11-12): $\mathcal{S}_0$ and $\mathcal{S}_1$ collaboratively compute two key metrics while preserving gradient confidentiality.
      First, they calculate each compressed gradient's norm $\Vert \boldsymbol{g}_i^{t*} \Vert$ by applying Algorithm \ref{alg:defence} (SecNorm) to the compressed masked gradients $\boldsymbol{g}_i^{t*} + \boldsymbol{r}_i^{t*}$ and the precomputed encrypted compressed masks $\boldsymbol{c}^{t*}_i$. 
      Second, they determine cosine similarities $\text{cos}_i^t$ between client gradients and the reference gradient using Algorithm \ref{alg:SecCos} (SecCos). 
      Together, these metrics enable the identification of potentially malicious updates without compromising privacy.
\end{enumerate}
Through the JL transformation, we reduce the gradient dimension from $d$ to $k$ (where $k \ll d$), dramatically decreasing computational overhead while preserving the geometric properties essential for Byzantine defense. This compression technique ensures that $\Vert \boldsymbol{g}_i^t \Vert \approx \Vert \boldsymbol{g}_i^{t*} \Vert$, effectively resolving the computation bottleneck inherent in computing norms in the ciphertext domain.

The secure norm and cosine similarity calculations are performed using Algorithms \ref{alg:defence} ($\textbf{SecNorm}(\cdot)$) and \ref{alg:SecCos} ($\textbf{SecCos}(\cdot)$), respectively, which are detailed in the following subsections.

\paragraph{Secure Norm Calculation} Algorithm \ref{alg:defence} presents our protocol for computing the norm of a client gradient $\boldsymbol{g}_i$ in a privacy-preserving manner without exposing the gradient to either server. The protocol enables $\mathcal{S}_1$ to calculate $\Vert \boldsymbol{g}_i \Vert$ using the masked gradient $\boldsymbol{g}_i + \boldsymbol{r}_i$ from $\mathcal{S}_0$ and encrypted mask values $\boldsymbol{c}_i = [\textbf{Enc}(r_{i,0}, pk), \ldots, \textbf{Enc}(r_{i,d-1}, pk)]$ through secure computation. The key insight is to leverage the algebraic identity that decomposes the squared norm of the original gradient:
\begin{equation} \label{eq:norm}
\Vert \boldsymbol{g}_i \Vert^2 = \Vert \boldsymbol{g}_i + \boldsymbol{r}_i \Vert^2 + \Vert \boldsymbol{r}_i \Vert^2 - 2(\boldsymbol{g}_i + \boldsymbol{r}_i) \cdot \boldsymbol{r}_i^\top
\end{equation} 
This design allows us to compute the norm without revealing the actual gradient $\boldsymbol{g}_i$ to either server, thus preserving client privacy while enabling effective Byzantine defense.
For each gradient $\boldsymbol{g}_i$, the secure norm calculation protocol operates as follows:
\begin{enumerate}
  \item $\mathcal{S}_0$ computes encrypted dot product via homomorphic operations:
  \begin{align*}
    c'_{i,j} 
    &= \textbf{ScalarMul}(c_{i,j},\ g_{i,j} + r_{i,j},\ pk) \\
    &= \textbf{Enc}\big(r_{i,j}\cdot (g_{i,j} + r_{i,j}),\ pk\big), 
   \quad \forall j \in [0, d-1]. \\ \\
   c_{\text{sum}} 
    &= \textbf{Add}(\{c'_{i,j}\}_{j < d},\ pk) \\
    &= \textbf{Enc}\bigg(\sum_{j=0}^{d-1} (g_{i,j} + r_{i,j}) \cdot r_{i,j}, pk\bigg).
  \end{align*}

   \item $\mathcal{S}_0$ calculates the squared norm of masked gradient $\Vert \boldsymbol{g}_i + \boldsymbol{r}_i \Vert^2$ and sends both this value and $c_{\text{sum}}$ to $\mathcal{S}_1$.
    
    \item $\mathcal{S}_1$ decrypts to obtain the dot product:
      $$(\boldsymbol{g}_i+ \boldsymbol{r}_i) \cdot (\boldsymbol{r}_i)^\top \leftarrow \textbf{Dec}(c_{\text{sum}}, sk).$$

      \item Finally, $\mathcal{S}_1$ calculates the squared norm $\Vert \boldsymbol{g}_i \Vert^2$ of the client's gradient by applying Equation (\ref{eq:norm}).
  \end{enumerate}

  \begin{algorithm}
    \caption{Privacy-Preserving Gradient Norm Calculation - $\text{SecNorm}(\boldsymbol{g}_i+\boldsymbol{r}_i, \boldsymbol{c}_i)$}\label{alg:defence}
    \DontPrintSemicolon
    \KwIn{$\mathcal{S}_0$ holds masked gradient $\boldsymbol{g}_i + \boldsymbol{r}_i$ and encrypted mask $\boldsymbol{c}^{t}_i = [\textbf{Enc}(r^{t}_{i,j}, pk)]_{j \in[0,d-1]}$; $\mathcal{S}_1$ holds private key $sk$;}

    \KwOut{$\mathcal{S}_1$ obtains gradient norm $\Vert \boldsymbol{g}_i\Vert$ without learning $\boldsymbol{g}_i$}

    \tcc{@$\mathcal{S}_0$: Homomorphic Computation}

    $\mathcal{S}_0$ computes squared norm: $\Vert \boldsymbol{g}_i + \boldsymbol{r}_i \Vert^2 = \sum_{j=0}^{d-1}(g_{i,j}+r_{i,j})^2$\;
    
    \For{$j=0$ \KwTo $d-1$}{
    $c'_{i,j} \leftarrow \textbf{ScalarMul}(c_{i,j}, g_{i,j} + r_{i,j}, pk)$\;
    }
    
    $c_{\text{sum}} \leftarrow \textbf{Add}(\{c'_{i,j}\}_{j < d}, pk)$\;
    \tcp{$= \textbf{Enc}((\boldsymbol{g}_i+ \boldsymbol{r}_i) (\boldsymbol{r}_i)^\top)$}
    $\mathcal{S}_0$ sends $c_{\text{sum}}$ and $\Vert \boldsymbol{g}_i + \boldsymbol{r}_i \Vert^2$ to $\mathcal{S}_1$\;
    
    \tcc{@$\mathcal{S}_1$: Norm Recovery}
    $\mathcal{S}_1$ decrypts: $(\boldsymbol{g}_i+ \boldsymbol{r}_i) \cdot (\boldsymbol{r}_i)^\top \leftarrow \textbf{Dec}(c_{\text{sum}}, sk)$\;
    $\mathcal{S}_1$ computes: $\Vert \boldsymbol{r}_i \Vert^2 = \sum_{j=0}^{d-1}{(r_{i,j})^2}$\;
    
    $\Vert \boldsymbol{g}_i \Vert^2 = \Vert \boldsymbol{g}_i + \boldsymbol{r}_i \Vert^2 + \Vert \boldsymbol{r}_i \Vert^2 - 2 (\boldsymbol{g}_i+ \boldsymbol{r}_i) \cdot (\boldsymbol{r}_i)^\top$\;
    
    \KwRet{$\Vert \boldsymbol{g}_i \Vert = \sqrt{\Vert \boldsymbol{g}_i \Vert^2}$}
  \end{algorithm}

  \paragraph{Secure Cosine Similarity Calculation} Algorithm \ref{alg:SecCos} presents our protocol for computing cosine similarities $\text{cos}_i$ between client gradients $\boldsymbol{g}_i$ and the reference gradient $\boldsymbol{g}_{\text{standard}}$ in a privacy-preserving manner. The protocol leverages the additive secret sharing properties of our masking scheme to enable secure similarity computation without exposing individual client gradients to either server.

  The secure cosine similarity computation operates as follows:
  \begin{enumerate}
    \item $\mathcal{S}_0$ computes the inner product between the masked gradient and the reference gradient: $p_0 = (\boldsymbol{g}_i + \boldsymbol{r}_i) \cdot \boldsymbol{g}_{\text{standard}}^\top$, and sends this value to $\mathcal{S}_1$.
    
    \item $\mathcal{S}_1$ independently computes the inner product between its known mask and the reference gradient: $p_1 = \boldsymbol{r}_i \cdot \boldsymbol{g}_{\text{standard}}^\top$.
    
    \item $\mathcal{S}_1$ recovers the true inner product by exploiting the linearity property: $\boldsymbol{g}_i \cdot \boldsymbol{g}_{\text{standard}}^\top = p_0 - p_1$.
    
    \item Finally, $\mathcal{S}_1$ computes the cosine similarity using the recovered inner product and the gradient norms obtained from Algorithm \ref{alg:defence}:
    \begin{equation}\label{eq:cos}
      \text{cos}_i = \frac{\boldsymbol{g}_i \cdot \boldsymbol{g}_{\text{standard}}^\top}{\Vert \boldsymbol{g}_i \Vert \cdot \Vert \boldsymbol{g}_{\text{standard}} \Vert}
    \end{equation}
  \end{enumerate}

\begin{algorithm}
  \caption{Privacy-Preserving Cosine Similarity - $\text{SecCos}(\boldsymbol{g}_i+\boldsymbol{r}_i, \boldsymbol{r}_i, \boldsymbol{g}_{\text{standard}}, \Vert \boldsymbol{g}_i \Vert)$}\label{alg:SecCos}
    \DontPrintSemicolon
    \KwIn{$\mathcal{S}_0$ holds masked gradient $\boldsymbol{g}_i + \boldsymbol{r}_i$; 
          $\mathcal{S}_1$ holds mask $\boldsymbol{r}_i$, reference gradient $\boldsymbol{g}_{\text{standard}}$, norm $\Vert \boldsymbol{g}_i \Vert$}
    \KwOut{Cosine similarity $\text{cos}_i$ without revealing $\boldsymbol{g}_i$}
    
    \tcc{@$\mathcal{S}_0$: Inner Product Calculation}
    $\mathcal{S}_0$ computes: $p_0 = (\boldsymbol{g}_i + \boldsymbol{r}_i) \cdot \boldsymbol{g}_{\text{standard}}^\top$\;
    $\mathcal{S}_0$ sends $p_0$ to $\mathcal{S}_1$\;
    
    \tcc{@$\mathcal{S}_1$: Similarity Recovery}
    $\mathcal{S}_1$ computes: $p_1 = \boldsymbol{r}_i \cdot \boldsymbol{g}_{\text{standard}}^\top$\;
    $\mathcal{S}_1$ recovers true inner product: $\boldsymbol{g}_i \cdot \boldsymbol{g}_{\text{standard}}^\top = p_0 - p_1$\;
    $\mathcal{S}_1$ calculates $\text{cos}_i$ according to Equation (\ref{eq:cos})\;
    
    \KwRet{$\text{cos}_i$}
\end{algorithm}
\subsection{Secure Aggregation}
Algorithm \ref{alg:SecAgg} details the secure aggregation process, which combines local gradients using a Byzantine-robust weighted averaging scheme. The aggregation weights are derived from the cosine similarities and gradient norms computed during the defense phase, ensuring that malicious updates receive minimal influence while preserving the contributions of legitimate clients.
The secure aggregation process operates in three phases. First, $\mathcal{S}_1$ computes trust scores using $\text{TS}_i = \max(0, \text{cos}_i)$ and calculates the aggregation weights for each gradient $\boldsymbol{g}_i$:
$$\omega_i = \frac{\text{TS}_i}{\sum_{j=1}^{n} \text{TS}_j} \cdot \frac{\Vert\boldsymbol{g}_{\text{standard}}\Vert}{\Vert \boldsymbol{g}_i \Vert}$$

Second, $\mathcal{S}_1$ computes the weighted mask sum $\boldsymbol{m} = \sum_{i=1}^{n} \omega_i \boldsymbol{r}_i$ and sends both the aggregation weights $\{\omega_i\}_{i \in [n]}$ and the weighted mask sum $\boldsymbol{m}$ to $\mathcal{S}_0$. 

Finally, $\mathcal{S}_0$ performs the secure weighted aggregation:
$$\boldsymbol{g}_{\text{global}} = \sum_{i=1}^{n} \omega_i (\boldsymbol{g}_i + \boldsymbol{r}_i) - \boldsymbol{m}$$
This recovers the true weighted average of client gradients $\sum_{i=1}^{n} \omega_i \boldsymbol{g}_i$ without either server learning individual gradient values. The final aggregated gradient $\boldsymbol{g}_{\text{global}}$ is then broadcast to all clients for the model update phase.

\begin{algorithm}
  \caption{Secure Robust Aggregation - $\text{SecAgg}(\{\boldsymbol{g}_i+\boldsymbol{r}_i\}, \{\boldsymbol{r}_i\}, \{\text{cos}_i\}, \{\Vert \boldsymbol{g}_i \Vert\})$}\label{alg:SecAgg}
    \DontPrintSemicolon
    \KwIn{$\mathcal{S}_0$ holds masked gradients $\{\boldsymbol{g}_i + \boldsymbol{r}_i\}_{i<n}$; 
          $\mathcal{S}_1$ holds masks $\{\boldsymbol{r}_i\}_{i<n}$, reference gradient $\boldsymbol{g}_{\text{standard}}$, 
          similarities $\{\text{cos}_i\}_{i<n}$, norms $\{\Vert \boldsymbol{g}_i \Vert\}_{i<n}$}
    \KwOut{Global gradient $\boldsymbol{g}_{\text{global}}$}
    
    \tcc{@$\mathcal{S}_1$: Weight Computation}
    $\mathcal{S}_1$ computes for each client $i$:\;
    $\text{TS}_i = \max(0, \text{cos}_i)$\;
    $\omega_i = \frac{\text{TS}_i}{\sum_{j=1}^{n} \text{TS}_j} \cdot \frac{\Vert\boldsymbol{g}_{\text{standard}}\Vert}{\Vert \boldsymbol{g}_i \Vert}$ \;
    
    \tcc{@$\mathcal{S}_0$: Secure Weighted Aggregation}
    $\mathcal{S}_1$ computes weighted mask sum: $\boldsymbol{m} = \sum_{i=1}^{n} \omega_i \boldsymbol{r}_i$\;
    $\mathcal{S}_1$ sends weights $\{\omega_i\}_{i \in [n]}$ and $\boldsymbol{m}$ to $\mathcal{S}_0$\;
    $\mathcal{S}_0$ aggregates: $\boldsymbol{g}_{\text{global}} = \sum_{i=1}^{n} \omega_i (\boldsymbol{g}_i + \boldsymbol{r}_i) - \boldsymbol{m}$\;
    
    \KwRet{$\boldsymbol{g}_{\text{global}}$}
\end{algorithm}

\subsection{Update Phase}
Finally, $\mathcal{S}_0$ broadcasts the aggregated global gradient $\boldsymbol{g}_{\text{global}}^t$ to all clients. Each client updates its local model using the standard gradient descent update rule: $\boldsymbol{W}_i^{t+1} = \boldsymbol{W}^{t} - \eta \cdot \boldsymbol{g}_{\text{global}}^t$, where $\eta$ is the learning rate (lines 15-19 in Algorithm \ref{alg:workflow}).

\section{Theoretical Analysis} \label{sec:theoretical_analysis}

\subsection{Security Analysis}
In this subsection, we employ the standard real-world/ideal-world simulation paradigm \cite{howtosimulate} from secure multiparty computation theory to formally establish that adversaries gain no additional information about honest clients' gradients beyond what is explicitly revealed through legitimate protocol outputs.

Following this paradigm, we model a security environment $\mathcal{E}$ where both real and ideal world executions receive identical inputs. Our protocol $\Pi$ is defined as a probabilistic polynomial-time algorithm that processes the set of client gradients $\{\boldsymbol{g}_i\}_{i \in [n]}$ to compute an aggregated result. The real-world execution $\text{Real}_{\Pi}(\kappa_1, \mathcal{A}, \{\boldsymbol{g}_i\}_{i \in [n]})$ denotes the output distribution when adversary $\mathcal{A}$ interacts with honest parties following $\Pi$, with all parties receiving their respective inputs and security parameter $\kappa_1$. In contrast, the ideal-world execution $\text{Ideal}_{\mathcal{F},\mathtt{Sim}}(\kappa_1, \mathcal{A}, \{\boldsymbol{g}_i\}_{i \in [n]})$ represents the output distribution when a simulator $\mathtt{Sim}$ interacts with the trusted functionality $\mathcal{F}$, which performs the intended computation with perfect security guarantees by securely processing all inputs without revealing any additional information beyond the legitimate outputs.

\begin{definition}
  \label{def:real_ideal}
  A protocol $\Pi$ securely realizes functionality $\mathcal{F}$ in the presence of semi-honest adversaries if for every probabilistic polynomial-time adversary $\mathcal{A}$ in the real world, there exists a probabilistic polynomial-time simulator $\mathtt{Sim}$ in the ideal world such that for any input set $\{\boldsymbol{g}_i\}_{i \in [n]}$:
  \begin{equation}
    \text{Real}_{\Pi}(\kappa_1, \mathcal{A}, \{\boldsymbol{g}_i\}_{i \in [n]}) \stackrel{c}{\approx} \text{Ideal}_{\mathcal{F},\mathtt{Sim}}(\kappa_1, \mathcal{A}, \{\boldsymbol{g}_i\}_{i \in [n]})
  \end{equation}
  where $\stackrel{c}{\approx}$ denotes computation indistinguishability as a function of the security parameter $\kappa_1$.
\end{definition}

\begin{lemma} \label{lem:paillier}
    \textnormal{(From~\cite{Paillier})} Paillier encryption is semantically secure under the honest-but-curious setting.
\end{lemma}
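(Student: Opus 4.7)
The plan is to invoke the standard reduction from Paillier's semantic security to the Decisional Composite Residuosity Assumption (DCRA), following the original argument in \cite{Paillier}. Since the lemma is explicitly quoted from that reference, the proof proposal amounts to recalling the reduction rather than producing a new argument.

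First I would formalize semantic security in the honest-but-curious setting as the IND-CPA game: a PPT adversary $\mathcal{A}$ chooses two messages $m_0, m_1 \in \mathbb{Z}_N$, receives $\textbf{Enc}(m_b, pk)$ for a uniformly random bit $b$, and must output a guess $b'$; the scheme is semantically secure iff $|\Pr[b' = b] - 1/2|$ is negligible in $\kappa_1$. This is the relevant notion since in our protocol semi-honest adversaries obtain only honestly generated ciphertexts as side information, never decryption oracle access.

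Next I would describe the reduction. Given an IND-CPA distinguisher $\mathcal{A}$, one constructs a DCRA distinguisher $\mathcal{B}$: on input $(N, z)$ with $z \in \mathbb{Z}_{N^2}^*$, $\mathcal{B}$ receives challenge messages $m_0, m_1$ from $\mathcal{A}$, selects $b$ uniformly, and returns the simulated ciphertext $c = (N+1)^{m_b} \cdot z \bmod N^2$. When $z$ is a uniform $N$-th residue, $c$ is distributed exactly as a valid Paillier encryption of $m_b$; when $z$ is a uniform non-residue, multiplication absorbs $(N+1)^{m_b}$ into a uniformly random coset representative, so the bit $b$ is information-theoretically hidden. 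A non-negligible IND-CPA advantage therefore yields a non-negligible DCRA distinguisher, contradicting the assumption.

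The main obstacle, and the only non-routine step, is the group-theoretic fact underlying the reduction: $\mathbb{Z}_{N^2}^*$ decomposes as the internal direct product of the cyclic order-$N$ subgroup generated by $(N+1)$ and the subgroup of $N$-th residues, so multiplication by a uniform non-residue perfectly masks the plaintext-carrying component. Verifying this structure rests on the Chinese Remainder Theorem applied to $\mathbb{Z}_{N^2}^* \cong \mathbb{Z}_{p^2}^* \times \mathbb{Z}_{q^2}^*$ together with the order computation showing $(N+1)$ has order exactly $N$ modulo $N^2$. Once this decomposition is in hand the reduction is immediate, and we would simply cite \cite{Paillier} for the full formal treatment.
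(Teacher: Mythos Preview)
Your proposal is correct: the paper itself gives no proof of this lemma, treating it purely as a cited fact from \cite{Paillier}, and your sketch accurately recalls the standard IND-CPA-to-DCRA reduction from that reference. Since the paper's ``proof'' is simply the citation, your write-up is entirely consistent with---and in fact more detailed than---what the paper provides.
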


To establish the security of our protocol, we consider an adversary $\mathcal{A}$ that corrupts either one server (but not both) or a subset of clients. We analyze the adversary's view in both real and ideal worlds by employing the hybrid argument technique, where we construct a sequence of hybrid worlds that gradually transition from the real-world execution to the ideal-world execution. Each hybrid world corresponds to a different stage of the protocol, and we demonstrate that the views of the adversary in each hybrid world are computationally indistinguishable. By the transitivity of computational indistinguishability, this ensures that the adversary's view in the real world is computationally indistinguishable from the ideal world, thereby guaranteeing that no information about honest clients' gradients can be extracted beyond what is legitimately revealed through the protocol's outputs.

\begin{theorem}
  \label{thm:indistinguishability}
  (Gradient Confidentiality Against Semi-honest Servers):
  In the presence of a semi-honest adversary $\mathcal{A}$ that corrupts a single server $\mathcal{S}_x$ (where $x \in \{0,1\}$), our protocol $\Pi$ securely realizes the ideal functionality $\mathcal{F}$ for Byzantine-robust federated aggregation, as defined in Definition \ref{def:real_ideal}. 

\end{theorem}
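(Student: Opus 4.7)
The plan is to apply the real-ideal simulation paradigm with a hybrid argument, partitioned by which of the two non-colluding servers the semi-honest adversary $\mathcal{A}$ corrupts. Since the two servers never collude, the only cases are (i) $\mathcal{A}$ corrupts $\mathcal{S}_0$ plus an arbitrary subset of clients, or (ii) $\mathcal{A}$ corrupts $\mathcal{S}_1$ plus an arbitrary subset of clients. In each case I will exhibit a PPT simulator $\mathtt{Sim}$ and a sequence of hybrids whose adjacent worlds are computationally indistinguishable, so that transitivity yields $\text{Real} \stackrel{c}{\approx} \text{Ideal}$.

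For Case (i), the adversary's view is $\text{view}_{\mathcal{S}_0} = \{pk, \boldsymbol{R}, \{\boldsymbol{g}_j + \boldsymbol{r}_j\}_j, \{\boldsymbol{c}_j^{t*}\}_j, \{\omega_j\}_j, \boldsymbol{m}, \boldsymbol{g}_{\text{global}}\}$, where for each corrupted client the pair $(\boldsymbol{g}_j, \boldsymbol{r}_j)$ is already known to $\mathcal{A}$. I would walk through three hybrids. $H_1$ replaces each honest client's masked gradient $\boldsymbol{g}_j + \boldsymbol{r}_j$ with a uniformly random vector in $\mathbb{Z}_q^d$; indistinguishability follows from the pseudorandomness of $\boldsymbol{r}_j = G(s_j,t)\bmod q$, since $\mathcal{S}_0$ does not hold the seed $s_j$. $H_2$ replaces every entry of each precomputed ciphertext $\boldsymbol{c}_j^{t*}$ by a fresh Paillier encryption of $0$; this step is computationally indistinguishable by Lemma \ref{lem:paillier}, because $\mathcal{S}_0$ does not possess $sk$. $H_3$ programs the weighted mask sum as $\boldsymbol{m} = \sum_j \omega_j(\boldsymbol{g}_j + \boldsymbol{r}_j)^{\mathrm{sim}} - \boldsymbol{g}_{\text{global}}$ using the ideal functionality's output $\boldsymbol{g}_{\text{global}}$, so that the final reconstructed aggregate is identically distributed in the real and ideal worlds.

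For Case (ii), the adversary's view is $\text{view}_{\mathcal{S}_1} = \{sk, \{s_j\}_j, \boldsymbol{g}_{\text{standard}}, \{c_{\text{sum},j}\}_j, \{\Vert \boldsymbol{g}_j + \boldsymbol{r}_j\Vert^2\}_j, \{p_{0,j}\}_j, \{\Vert \boldsymbol{g}_j\Vert\}_j, \{\text{cos}_j\}_j, \{\omega_j\}_j, \boldsymbol{g}_{\text{global}}\}$. The simulator $\mathtt{Sim}_1$ is handed $(\Vert \boldsymbol{g}_j\Vert, \text{cos}_j, \boldsymbol{g}_{\text{global}})$ by $\mathcal{F}$. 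For each honest client it samples a dummy gradient $\tilde{\boldsymbol{g}}_j$ with norm $\Vert \boldsymbol{g}_j\Vert$ and cosine similarity $\text{cos}_j$ relative to $\boldsymbol{g}_{\text{standard}}$, and then computes $\Vert \tilde{\boldsymbol{g}}_j + \boldsymbol{r}_j\Vert^2$, $p_{0,j} = (\tilde{\boldsymbol{g}}_j + \boldsymbol{r}_j)\cdot \boldsymbol{g}_{\text{standard}}^\top$, and $c_{\mathrm{sum},j}$ as a fresh encryption of $(\tilde{\boldsymbol{g}}_j + \boldsymbol{r}_j)\cdot \boldsymbol{r}_j^\top$ using $pk$; because $\mathcal{S}_1$'s downstream decryption reproduces exactly the plaintext it would have derived in the real world modulo this substitution, the distributions match up to the (identical) values of $\Vert \boldsymbol{g}_j\Vert$ and $\text{cos}_j$.

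The main obstacle is Case (ii), because here $\mathcal{A}$ holds $sk$, so semantic security of Paillier no longer hides any intermediate. Indistinguishability must come from the algebraic structure alone, and a closer look shows that $\mathcal{S}_1$ can algebraically recover the single scalar $\boldsymbol{g}_j \cdot \boldsymbol{r}_j^\top$ from $\Vert \boldsymbol{g}_j + \boldsymbol{r}_j\Vert^2$ and the decrypted $(\boldsymbol{g}_j + \boldsymbol{r}_j)\cdot\boldsymbol{r}_j^\top$ together with its known mask. To make the reduction clean I would augment the ideal functionality $\mathcal{F}$ so that $\mathcal{S}_1$ legitimately receives $(\Vert \boldsymbol{g}_j\Vert, \text{cos}_j, \boldsymbol{g}_j\cdot \boldsymbol{r}_j^\top)$, arguing that this single projection against a high-entropy random vector does not enable gradient reconstruction in the sense of Section \ref{sec:threat_model}; the simulator then reproduces the intermediates perfectly. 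Combining both cases via union bounds over the $T$ rounds and the $n$ clients, and invoking the hybrid lemma, yields $\text{Real}_{\Pi}\stackrel{c}{\approx}\text{Ideal}_{\mathcal{F},\mathtt{Sim}}$, establishing the theorem.
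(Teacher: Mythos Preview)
Your overall strategy---simulation via a hybrid argument, split into the two corruption cases---matches the paper exactly, and your Case (i) is essentially identical to the paper's Case 1: masked gradients replaced by uniform vectors via the one-time-pad property of the pseudorandom masks, Paillier ciphertexts replaced by fresh encryptions via Lemma~\ref{lem:paillier}, and the weighted mask sum handled last.

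The substantive divergence is in Case (ii). The paper's proof first replaces the honest clients' seeds $s_i$ with fresh random seeds $\mu_i$ (so that $\mathcal{S}_1$ now holds fresh masks $\boldsymbol{\phi}_i$), and then replaces each $\boldsymbol{g}_i$ by a synthetic $\boldsymbol{\psi}_i$ constructed to share the same norm and the same inner product with $\boldsymbol{g}_{\text{standard}}$, asserting that these two scalars are the only quantities $\mathcal{S}_1$ can extract. You instead retain the real masks and correctly observe that $\mathcal{S}_1$ can also derive the scalar $\boldsymbol{g}_j\cdot\boldsymbol{r}_j^\top$ from $\Vert\boldsymbol{g}_j+\boldsymbol{r}_j\Vert^2$ and the decrypted $c_{\text{sum}}$, and you propose to fold this extra projection into the ideal functionality's legitimate leakage. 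Your observation is a genuine subtlety that the paper's argument glosses over: even under the paper's hybrid ordering, $\mathcal{S}_1$ can compute $\boldsymbol{\psi}_i\cdot\boldsymbol{\phi}_i^\top$ versus $\boldsymbol{g}_i\cdot\boldsymbol{\phi}_i^\top$ for a \emph{known} $\boldsymbol{\phi}_i$, and the paper does not justify why these coincide. Your fix---enlarging $\mathcal{F}$ to leak one projection against a high-entropy random direction---makes the simulation go through cleanly, at the price of slightly weakening the stated ideal functionality; the paper's route leaves $\mathcal{F}$ untouched but under-argues that step. One minor correction: in the actual protocol \textbf{SecNorm} is called on the \emph{compressed} vectors, so the leaked scalar is $\boldsymbol{g}_j^{*}\cdot(\boldsymbol{r}_j^{*})^\top$ rather than $\boldsymbol{g}_j\cdot\boldsymbol{r}_j^\top$; adjust your view description accordingly.
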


\begin{proof}
  We construct a simulator $\mathtt{Sim}$ that can generate a view computationally indistinguishable from the real-world execution without access to honest clients' private gradients. We analyze both potential corruption cases separately:

  \textbf{Case 1: $\mathcal{A}$ corrupts $\mathcal{S}_0$ (with $\mathcal{S}_1$ honest)}

  In this case, the adversary $\mathcal{A}$ controlling $\mathcal{S}_0$ observes:
  \begin{itemize}
    \item Masked gradients $\{\boldsymbol{g}_i + \boldsymbol{r}_i\}_{i \in [n]}$ received from clients.
    \item Encrypted mask projections from $\mathcal{S}_1$:
    $$\{\boldsymbol{c}^*_i\}_{i \in [n]} = \{\textbf{Enc}(r^*_{i,j}, pk)\}_{i \in [n], j \in [0,k-1]}.$$
    \item Aggregation weights $\{\omega_i\}_{i \in [n]}$ and weighted mask sum $\boldsymbol{m}= \sum_{i=1}^{n} \omega_i \boldsymbol{r}_i$ from $\mathcal{S}_1$.
  \end{itemize}

  We construct the following sequence of hybrid worlds:

  \textbf{Hybrid 0 ($H_0$)}: The real protocol execution where $\mathcal{S}_0$ receives actual protocol messages.

  \textbf{Hybrid 1 ($H_1$)}: Identical to $H_0$, except that the simulator replaces the masked gradients $\{\boldsymbol{g}_i + \boldsymbol{r}_i\}_{i \in [n]}$ with uniformly random vectors $\{\boldsymbol{\phi}_i\}_{i \in [n]}$.

  \textit{Indistinguishability of $H_0$ and $H_1$}: Since each mask $\boldsymbol{r}_i$ is generated from a random seed $s_i$ using a PRNG, it follows a uniform distribution and is independent of $\boldsymbol{g}_i$. The masked gradient $\boldsymbol{g}_i + \boldsymbol{r}_i$ achieves perfect secrecy through the one-time pad property, making it uniformly distributed and statistically independent of the original gradient $\boldsymbol{g}_i$. Therefore, the distributions of $\{\boldsymbol{g}_i + \boldsymbol{r}_i\}_{i \in [n]}$ and $\{\boldsymbol{\phi}_i\}_{i \in [n]}$ are computationally indistinguishable, establishing perfect indistinguishability between $H_0$ and $H_1$.

  \textbf{Hybrid 2 ($H_2$)}: Identical to $H_1$, except that the simulator generates encrypted mask projections $\{\boldsymbol{c}^*_i\}_{i \in [n]}$ as encryptions of random values $\{\boldsymbol{\chi}_i\}_{i \in [n]}$ using the public key $pk$.

  \textit{Indistinguishability of $H_1$ and $H_2$}: By the semantic security of the Paillier cryptosystem from Lemma \ref{lem:paillier}, the encrypted values $\{\boldsymbol{c}^*_i = \text{Enc}(\boldsymbol{r}^*_i, pk)\}_{i \in [n]}$ and $\{\text{Enc}(\boldsymbol{\chi}_i, pk)\}_{i \in [n]}$ are computationally indistinguishable to any adversary without the secret key $sk$. Therefore, $H_1$ and $H_2$ are computationally indistinguishable.

  \textbf{Hybrid 3 ($H_3$)}: Identical to $H_2$, except that the simulator generates random aggregation weights $\{\omega_i\}_{i \in [n]}$ subject to $\sum_{i=1}^{n} \omega_i = 1$ and a random vector $\boldsymbol{\mu}$ as the weighted mask sum.

  \textit{Indistinguishability of $H_2$ and $H_3$}: The weights $\{\omega_i\}_{i \in [n]}$ in the real protocol are derived from cosine similarities and norm values that are computed securely without $\mathcal{S}_0$ learning their actual values. From $\mathcal{S}_0$'s perspective, these weights appear random subject to normalization constraints. Similarly, since each mask $\boldsymbol{r}_i$ is uniformly random, the weighted sum $\boldsymbol{m} = \sum_{i=1}^{n} \omega_i \boldsymbol{r}_i$ is computationally indistinguishable from a random vector $\boldsymbol{\mu}$ of the same dimension. Thus, $H_2$ and $H_3$ are computationally indistinguishable.

  $H_3$ corresponds to the simulator $\mathtt{Sim}$'s output in the ideal world. By the transitivity of computation indistinguishability, the real-world view ($H_0$) is indistinguishable from the ideal-world view ($H_3$).

  \textbf{Case 2: $\mathcal{A}$ corrupts $\mathcal{S}_1$ (with $\mathcal{S}_0$ honest)}

  In this case, the adversary $\mathcal{A}$ controlling $\mathcal{S}_1$ observes:
  \begin{itemize}
    \item Mask seeds $\{s_i\}_{i \in [n]}$ received from clients.
    \item Inner products $\{p_0^i = (\boldsymbol{g}_i + \boldsymbol{r}_i) \cdot \boldsymbol{g}_{\text{standard}}^\top\}_{i \in [n]}$ from $\mathcal{S}_0$.
    \item Squared norms $\{d_i^2 = \Vert \boldsymbol{g}_i^*+\boldsymbol{r}_i^*\Vert^2\}_{i \in [n]}$ from $\mathcal{S}_0$.
    \item Encrypted dot products $\{c_{\text{sum}}^i = \textbf{Enc}((\boldsymbol{g}_i^*+\boldsymbol{r}_i^*)\cdot\boldsymbol{r}_i^*)\}_{i \in [n]}$ from $\mathcal{S}_0$.
  \end{itemize}

  We construct the following hybrids:

  \textbf{Hybrid 0 ($H_0$)}: The real protocol execution.

  \textbf{Hybrid 1 ($H_1$)}: Identical to $H_0$, except that the simulator uses randomly generated seeds $\{\mu_i\}_{i \in [n]}$ instead of the actual masks $\{s_i\}_{i \in [n]}$.

  \textit{Indistinguishability of $H_0$ and $H_1$}: Since the masks $\{s_i\}_{i \in [n]}$ are uniformly random by design, replacing them with another set of uniformly random vectors $\{\mu_i\}_{i \in [n]}$ produces an identical distribution, making $H_0$ and $H_1$ perfectly indistinguishable.

  \textbf{Hybrid 2 ($H_2$)}: Identical to $H_1$, except that the simulator generates synthetic values using carefully constructed vectors $\{\boldsymbol{\psi}_i\}_{i \in [n]}$ instead of the actual client gradients $\{\boldsymbol{g}_i\}_{i \in [n]}$.

  The simulator constructs vectors $\{\boldsymbol{\psi}_i\}_{i \in [n]}$ that preserve the exact properties observable by $\mathcal{S}_1$: the norm $\Vert \boldsymbol{g}_i^* \Vert$ and inner product $\boldsymbol{g}_i \cdot \boldsymbol{g}_{\text{standard}}^\top$. For each gradient, the simulator:
  \begin{enumerate}
    \item Generates random vector $\boldsymbol{u} \in \mathbb{R}^d$
    \item Projects $\boldsymbol{u}$ orthogonal to $\boldsymbol{g}_{\text{standard}}$: 
    $$\boldsymbol{u}_{\perp} = \boldsymbol{u} - \frac{\boldsymbol{u} \cdot \boldsymbol{g}_{\text{standard}}^\top}{\|\boldsymbol{g}_{\text{standard}}\|^2}\boldsymbol{g}_{\text{standard}}$$
    \item Normalizes: $\hat{\boldsymbol{u}}_{\perp} = \frac{\boldsymbol{u}_{\perp}}{\|\boldsymbol{u}_{\perp}\|}$
    \item Constructs $\boldsymbol{\psi}_i$ with matching norm and inner product: 
    $$\boldsymbol{\psi}_i = \frac{\rho \cdot \boldsymbol{g}_{\text{standard}}}{\|\boldsymbol{g}_{\text{standard}}\|^2} + \sqrt{d^2 - \frac{\rho^2}{\|\boldsymbol{g}_{\text{standard}}\|^2}} \cdot \hat{\boldsymbol{u}}_{\perp}$$ 
    where $\rho = \boldsymbol{g}_i \cdot \boldsymbol{g}_{\text{standard}}^\top$
  \end{enumerate}

  Using vectors $\{\boldsymbol{\psi}_i\}_{i \in [n]}$ and random masks $\{\boldsymbol{\phi}_i\}_{i \in [n]} = \{G(\mu_i)\}_{i \in [n]}$, the simulator generates synthetic values:
    \begin{align*}
    \left\{\hat{p}_0^i\right\}_{i \in [n]} &= \left\{(\boldsymbol{\psi}_i + \boldsymbol{\phi}_i) \cdot \boldsymbol{g}_{\text{standard}}^\top\right\}_{i \in [n]} \\
    \left\{\hat{d}_i^2\right\}_{i \in [n]} &= \left\{\Vert \boldsymbol{\psi}_i^*+\boldsymbol{\phi}_i^*\Vert^2\right\}_{i \in [n]} \\
    \left\{\hat{c}_{\text{sum}}^i\right\}_{i \in [n]} &= \left\{\textbf{Enc}((\boldsymbol{\psi}_i^*+\boldsymbol{\phi}_i^*)\cdot(\boldsymbol{\phi}_i^*)^\top)\right\}_{i \in [n]}
    \end{align*}

  \textit{Indistinguishability of $H_1$ and $H_2$}: By our construction, the synthetic vectors $\boldsymbol{\psi}_i$ preserve exactly the properties of $\boldsymbol{g}_i$ that are observable by $\mathcal{S}_1$ through the protocol: specifically, $\boldsymbol{\psi}_i \cdot \boldsymbol{g}_{\text{standard}}^\top = \boldsymbol{g}_i \cdot \boldsymbol{g}_{\text{standard}}^\top$ and $\Vert \boldsymbol{\psi}_i^* \Vert = \Vert \boldsymbol{g}_i^* \Vert$.
  Consequently, after $\mathcal{S}_1$ processes the values received in both hybrids (subtracting the mask from $p_0^i$ and decrypting $c_{\text{sum}}^i$), the resulting values are mathematically identical. For any $\boldsymbol{\psi}_i$ constructed using our method, we have $\boldsymbol{\psi}_i \cdot \boldsymbol{g}_{\text{standard}}^\top = \boldsymbol{g}_i \cdot \boldsymbol{g}_{\text{standard}}^\top$ and $\Vert \boldsymbol{\psi}_i^* \Vert = \Vert \boldsymbol{g}_i^* \Vert$ by design.

  Since $\mathcal{S}_1$ only observes these specific geometric properties (inner products and norms) through the protocol, and there exists an infinite-dimensional subspace of vectors maintaining these same properties (due to our construction with randomly chosen orthogonal components), no additional information about the original gradients $\boldsymbol{g}_i$ is revealed. The views in $H_1$ and $H_2$ are thus perfectly indistinguishable to $\mathcal{S}_1$.
  Therefore, the view of $\mathcal{S}_1$ in $H_1$ is perfectly indistinguishable from its view in $H_2$.

  Since $H_2$ corresponds to the simulator's output in the ideal world, and $H_0$ and $H_2$ are computationally indistinguishable, the theorem holds.
\end{proof}

    Through our hybrid arguments, we've demonstrated that the adversary's view in the real protocol is computationally indistinguishable from its view in the ideal world. This proves that adversaries learn nothing about honest clients' gradients beyond the legitimate outputs.

\subsection{Complexity Analysis} \label{sec:complexity_analysis}
In this subsection, we analyze the computation and communication complexity of our proposed protocol, using ShieldFL \cite{ShieldFL} as the baseline due to its similar privacy guarantees. Communication complexity is measured by the number of bits transferred between clients and servers, while computation complexity is quantified by the number of fundamental operations required. We denote $T_{\text{exp}}$ as the time for a single modular exponentiation over a finite field of length $2\kappa_1$, $T_{\text{mul}}$ as the time for modular multiplication over the same field, and $T_{\text{add}}$ as the time for modular addition operations over a smaller field of length $\kappa_2$. 
Note that the computational overhead of compression operations in our scheme is negligible, as they involve only real-domain matrix multiplications that are computationally insignificant compared to cryptographic operations over finite fields.
For instance, according to our experimental measurements, a single modular exponentiation operation requires approximately 2.9ms over a 1024-bit field, while a real-domain multiplication requires only $5 \times 10^{-5}$ms. Additionally, the compression operations can be further accelerated by modern GPU hardware.

\subsubsection{Computation Complexity}
Table \ref{tab:computational_overhead} presents a comprehensive comparison of computational complexity between our proposed approach and ShieldFL. 

On the client-side, our scheme demonstrates significantly reduced computational overhead by implementing simple masking operations instead of requiring clients to perform expensive homomorphic encryption operations as in ShieldFL. This makes our scheme particularly suitable for resource-constrained client devices. The most substantial efficiency gains appear on the server-side. While ShieldFL requires $(18dn + 26n + 2d)T_{\text{exp}} + (9dn + 9n + 2d)T_{\text{mul}}$ operations, our scheme reduces computational overhead in three key ways. First, our basic scheme decreases complexity to $(dn + n)T_{\text{exp}} + (dn + n)T_{\text{mul}}$ by strategically combining additive masking with minimal homomorphic operations. Second, our compressed scheme further reduces computation requirements to $(kn + n)T_{\text{exp}} + (kn + n)T_{\text{mul}}$, where $k \ll d$, representing an asymptotic improvement from $O(dn)$ to $O(kn)$. Finally, our scheme enables preprocessing of cryptographic operations, as masks can be generated and encrypted offline, further reducing real-time computation demands.

\begin{table*}[!t]
  \caption{computational overhead of Schemes\label{tab:computational_overhead}}
  \centering
  \begin{tabular}{|c||c|c|c|}
    \hline
    \textbf{Scheme} & \textbf{Clients} & \textbf{Server (Online)} & \textbf{Server (Offline)} \\
    \hline
    ShieldFL & $2dnT_{\text{exp}} + dnT_{\text{mul}}$ & $(18dn + 26n + 2d)T_{\text{exp}} + (9dn + 9n + 2d)T_{\text{mul}}$ & -  \\
    \hline
    Ours w/o compression & $dnT_{\text{add}}$ & $(dn + n)T_{\text{exp}} + (dn + n)T_{\text{mul}}$ &  $2dnT_{\text{exp}} + dnT_{\text{mul}}$ \\
    \hline
    Ours with compression & $dnT_{\text{add}}$ & $(kn + n)T_{\text{exp}} + (kn + n)T_{\text{mul}}$ & $2knT_{\text{exp}} + knT_{\text{mul}}$ \\

    \hline
  \end{tabular}
  \end{table*}

\subsubsection{Communication Complexity}
Table \ref{tab:communication_overhead} presents a comprehensive comparison of communication overhead between our protocol and ShieldFL. Our scheme achieves substantial efficiency gains while maintaining equivalent security guarantees. On the client side, our approach requires only $dn\kappa_2$ bits of communication, which is significantly more efficient than ShieldFL's $2dn\kappa_1$ bits per client, since $\kappa_2 < \kappa_1$ in our design.
The most substantial improvements emerge in server-to-server communication. While ShieldFL requires $(12dn + 12n + 4d)\kappa_1$ bits of inter-server communication, our uncompressed scheme reduces this to $(2dn + 4n + d)\kappa_1$ bits by strategically combining additive masking with minimal homomorphic operations, thereby reducing the need for expensive ciphertext transmissions. 
When compression is applied, our protocol's communication requirements further decrease to $(2kn + 4n + d)\kappa_1$ bits, where $k \ll d$. This represents an improvement from $O(dn)$ to $O(kn)$ in communication complexity.

\begin{table}[!t]
  \caption{Communication overhead of Schemes\label{tab:communication_overhead}}
  \centering
  \begin{tabular}{|c||c|c|}
    \hline
    \textbf{Scheme} & \textbf{Clients} & \textbf{Server} \\
    \hline
    ShieldFL & $2dn\kappa_1$ & $(12dn + 12n + 4d)\kappa_1$ \\
    \hline
    Ours without compression& $dn\kappa_2$ & $(2dn + 4n + d)\kappa_1$ \\
    \hline
    Ours with compression & $dn\kappa_2$ & $(2kn + 4n + d)\kappa_1$ \\
    \hline
  \end{tabular}
  \end{table}

\subsection{The influence of the JL Transformation} \label{sec:analysis_jl}
As demonstrated in our complexity analysis (Section \ref{sec:complexity_analysis}), our protocol achieves substantial computational and communication efficiency improvements through JL transformation-based gradient compression. The JL transformation projects high-dimensional gradients from $\mathbb{R}^d$ to a lower-dimensional space $\mathbb{R}^k$ where $k \ll d$, while maintaining the geometric relationships necessary for Byzantine detection.

According to Lemma \ref{lem:JL}, the compressed gradient $\boldsymbol{g}_i^* \in \mathbb{R}^k$ preserves key properties of the original gradient $\boldsymbol{g}_i$ within provable error bounds with probability at least $1-\delta$:
\begin{equation*}
  (1-\epsilon)\Vert g_i \Vert^2 \leq \Vert g_i^* \Vert^2 \leq (1+\epsilon)\Vert g_i \Vert^2
\end{equation*} 
Similarly, cosine similarity between gradients is preserved with bounded error:
\begin{equation*}
  \frac{1}{(1-\epsilon)}\frac{\boldsymbol{g}_i \cdot \boldsymbol{g}_{\text{standard}}^\top}{\Vert \boldsymbol{g}_i \Vert \cdot \Vert \boldsymbol{g}_{\text{standard}} \Vert} \leq \text{cos}_i \leq \frac{1}{(1+\epsilon)}\frac{\boldsymbol{g}_i \cdot \boldsymbol{g}_{\text{standard}}^\top}{\Vert \boldsymbol{g}_i \Vert \cdot \Vert \boldsymbol{g}_{\text{standard}} \Vert}
\end{equation*}

Figure \ref{fig:cosine_error} illustrates the maximum cosine similarity error between compressed and original gradients for dimensions ranging from $100,000$ to $10,000,000$, confirming that errors remain within theoretical bounds even at extreme compression ratios. Notably, larger model architectures experience smaller errors at the same compression ratio, making our approach increasingly effective as model size grows.

To optimize the trade-off between computational efficiency and Byzantine robustness, we select compression ratios $k/d$ that maintain a theoretical error bound of $\epsilon \leq 0.2$ with confidence parameter $\delta = 0.01$. We adapt the compression ratio to each model's complexity: $k/d=0.01$ for ResNet20 (0.27M parameters), $k/d=0.001$ for MobileNetV1 (3.2M parameters), and $k/d=0.0001$ for ResNet18 (11.2M parameters). As demonstrated in Figure \ref{fig:cr_acc}, this configuration maintains effective Byzantine defense while enabling substantial computational savings.

\begin{figure}[!t]
  \centering
  \includegraphics[width=0.7\linewidth]{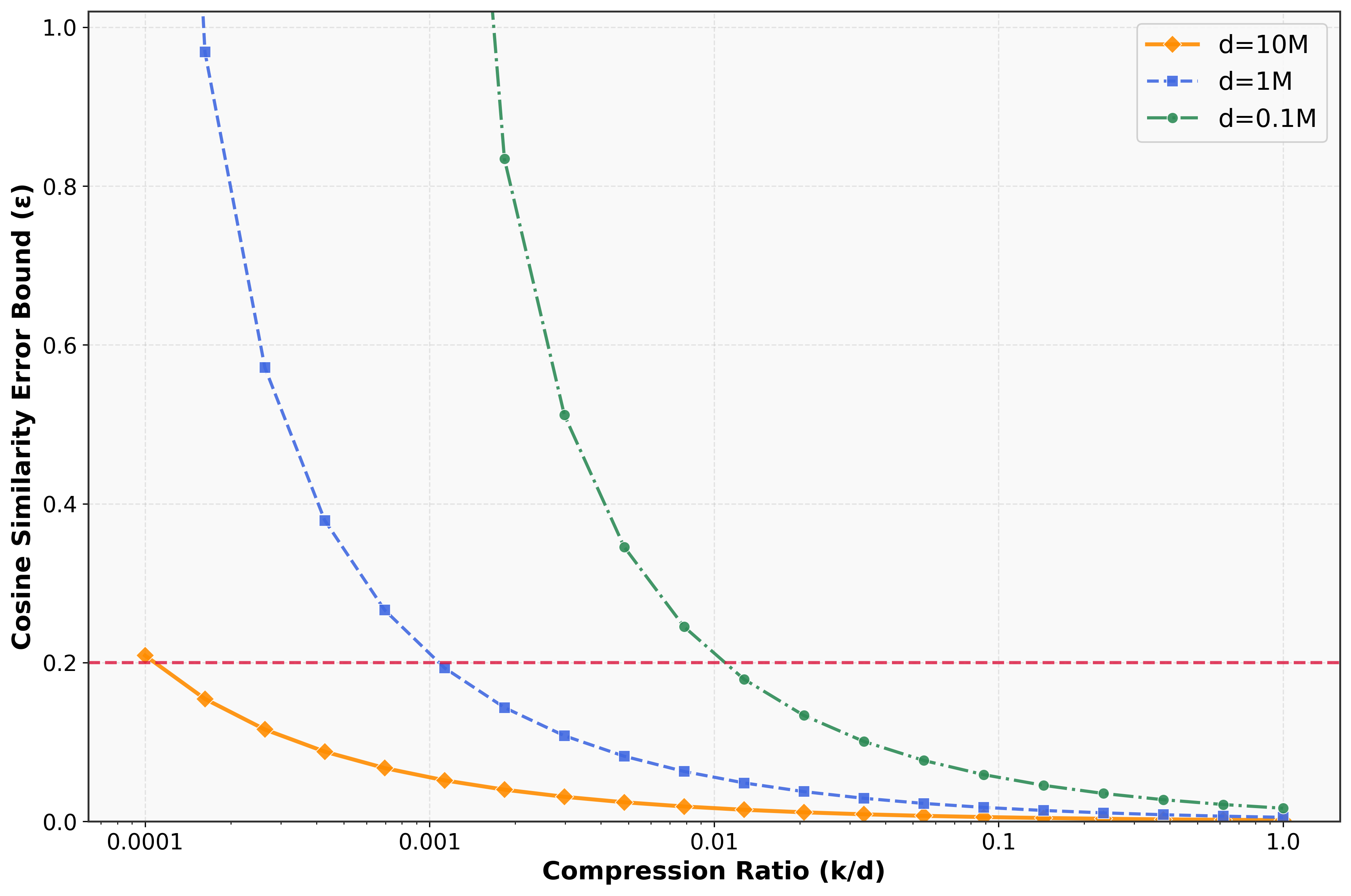}
  \caption{The cosine error between the compressed gradient and the original gradient.}
  \label{fig:cosine_error}
\end{figure}

\section{Experiments} \label{sec:experiments}

\subsection{Experimental Setup}
\subsubsection{Enviromental Settings}
All experiments are carried out in consistent settings to ensure fair comparisons. The hardware environment consists of an AMD Ryzen Threadripper 3970X CPU with 128 GB RAM, and model training was accelerated using an NVIDIA GeForce RTX 3080 Ti GPU. We implement all approaches using Python 3.10.13 and PyTorch 2.1.0 \cite{paszke2019pytorch}. 

\subsubsection{Datasets, Models and Hyperparameters}
Following prior work \cite{RFed, SplitAgg,LSFL}, we evaluate our approach on three standard image classification datasets: SVHN \cite{svhn}, CIFAR-10, and CIFAR-100 \cite{cifar}. CIFAR-10 contains 60,000 32×32 color images across 10 classes. SVHN (Street View House Numbers) consists of 600,000 32×32 color images of digits (0-9). The more challenging CIFAR-100 dataset contains 100 classes. The model architectures we used range from ResNet20 (0.27 million parameters) to MobileNetV1 \cite{mobilenet} (3.2 million parameters) and ResNet18 (11.2 million parameters), allowing us to evaluate our compression approach across different parameter scales.

We conduct the experiments with $100$ clients, where $10\%$ are selected for each round's training. All models are trained using SGD optimizer. The hyperparameters including training parameters and security parameters for each dataset are summarized in Table \ref{tab:hyperparameters}.

\begin{table}[!t]
  \caption{Hyperparameters\label{tab:hyperparameters}}
  \centering
  \begin{tabular}{|c|c|c|c|}
    \hline
    \textbf{Datasets} & SVHN & Cifar10 & Cifar100 \\
    \hline
    Model & MobileNetV1 & ResNet20 & ResNet18 \\
    \hline
    Learning Rate & $0.01$ & $0.1$ & $0.1$ \\
    \hline
    Batch Size& $32$ & $32$ & $10$\\
    \hline
    Compression Ratio & $0.001$ & $0.01$ & $0.0001$ \\
    \hline
    Security Parameter $\kappa_1$ & $512$ & $512$ & $512$ \\
    \hline
    Bit Length for Mask $\kappa_2$ & $64$ & $64$ & $64$ \\
    \hline
  \end{tabular}
\end{table}

\subsubsection{Poisoning Attacks}
We evaluated our scheme against various poisoning attacks following prior work \cite{LSFL,Li2024,SplitAgg,ShieldFL}:
\begin{itemize}
  \item \textbf{Sign-Flipping Attack:} Adversaries submit $-\boldsymbol{g_i}$ instead of true gradients $\boldsymbol{g_i}$.

  \item \textbf{Label-Flipping Attack:} Adversaries train on corrupted labels (e.g., mapping label $x$ to $9-x$ for CIFAR-10/SVHN, $x$ to $99-x$ for CIFAR-100).

  \item \textbf{Gaussian Noise Attack:} Adversaries add noise sampled from $\mathcal{N}(0, 4\sigma^2 I)$ to gradients.
  
  \item \textbf{Scaling Attack:} Adversaries multiply gradients by large factors ($c = 6$).

  \item \textbf{Optimization-Based Attacks:} Advanced attacks targeting specific defenses like Multi-Krum and Trimmed-Mean~\cite{local_poisoning}, denoted as Fang-MKrum and Fang-Trimmed-Mean respectively.
  
  \item \textbf{Min-Max/Min-Sum Attacks:} Strategic attacks minimizing distance metrics to evade Byzantine-robust aggregation \cite{manipulating}.
\end{itemize}

\subsubsection{Baselines}
We compare our approach against several baseline methods across different categories to provide a comprehensive evaluation. For robust FL methods, we include representative Byzantine-robust aggregation techniques: Krum \cite{krum}, Trimmed Mean \cite{trimmedmean}, and FLTrust \cite{fltrust}. For robust PPFL approaches, we evaluate against state-of-the-art methods including ShieldFL \cite{ShieldFL}, which employs dual-trapdoor homomorphic encryption with cosine similarity that achieves strong privacy guarantees, and two recent efficient Byzantine-robust PPFL methods: Split Aggregation \cite{SplitAgg} and LSFL \cite{LSFL}, which utilize SMC with RSVD and Euclidean distance metrics, respectively, representing the most efficient approaches currently available. Additionally, we include the standard FedAvg \cite{FedAvg} as a non-robust baseline to demonstrate the necessity of Byzantine defense mechanisms.

\subsection{Experimental Results}
We evaluate our scheme against baseline approaches across three dimensions: model accuracy under Byzantine attacks, computational efficiency (per-round execution time), and communication overhead (data transmission volume). We also analyze the impact of compression ratios (1.0 to 0.0001) on accuracy and efficiency across different model architectures.
\subsubsection{Byzantine Defence Performance}
Figure \ref{fig:byzantine_defence} illustrates our scheme's robustness compared to baseline approaches under various Byzantine attacks with adversarial clients ranging from 0\% to 40\% on CIFAR-10 and SVHN datasets. In benign environments (0\% Byzantine clients), most schemes maintain accuracy comparable to FedAvg, with only Split Aggregation and Krum showing slight degradation. However, as the proportion of adversarial clients increases, most baseline approaches experience substantial accuracy drops under at least one attack scenario. In contrast, only FLTrust and our scheme consistently maintain stable performance across all attack scenarios, even with 40\% adversarial clients. Notably, our approach achieves this robustness while implementing efficient dimension compression, demonstrating that our compression technique preserves the security guarantees of FLTrust without sacrificing defense capabilities.

\begin{figure}[!t]
  \centering
  \subfloat[CIFAR-10 with ResNet20]
  {\includegraphics[width=0.45\textwidth]{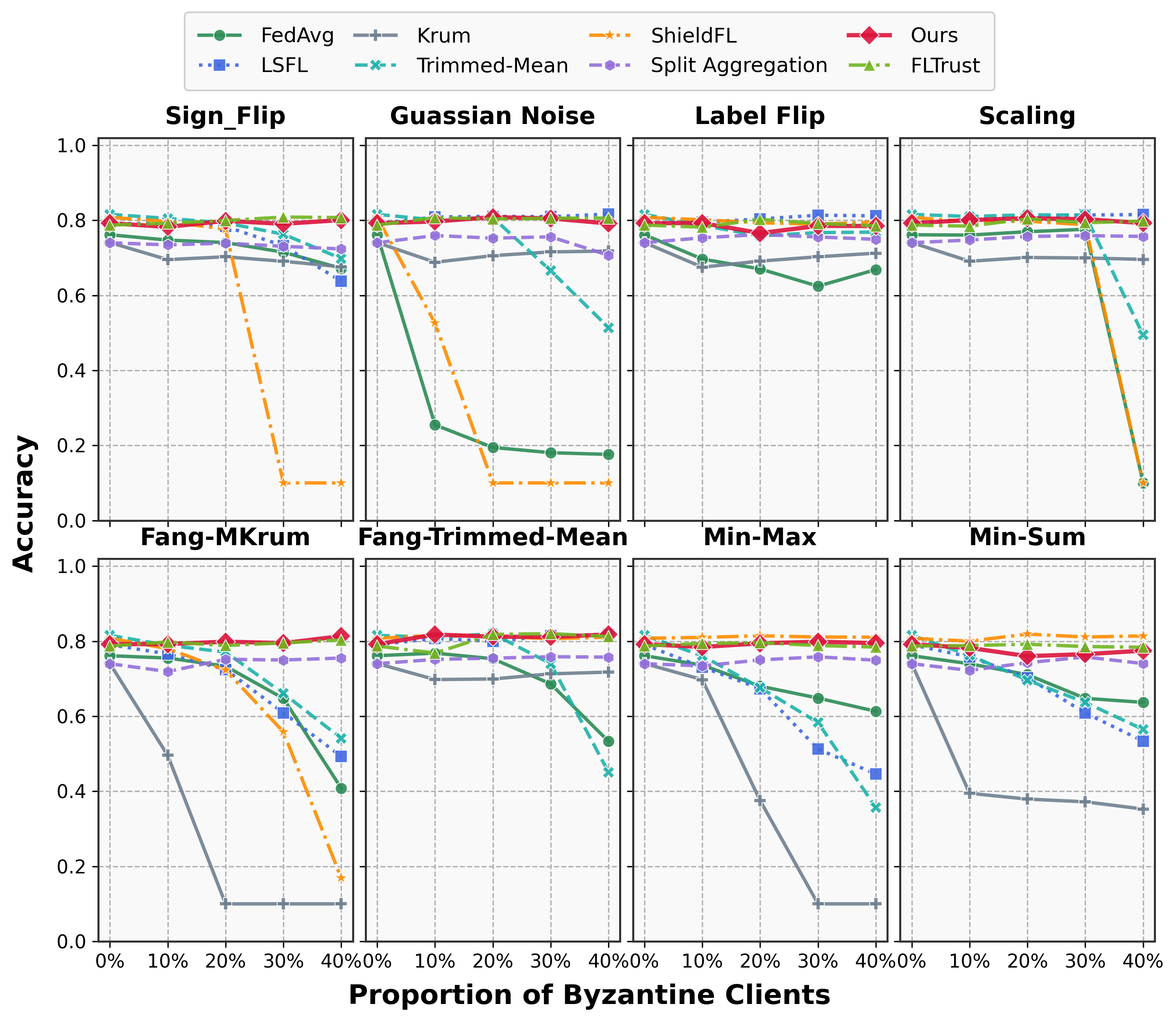}%
  \label{fig:byzantine_defence_resnet}}
  \vspace{0mm}
  \subfloat[SVHN with MobileNetV1]
  {\includegraphics[width=0.45\textwidth]{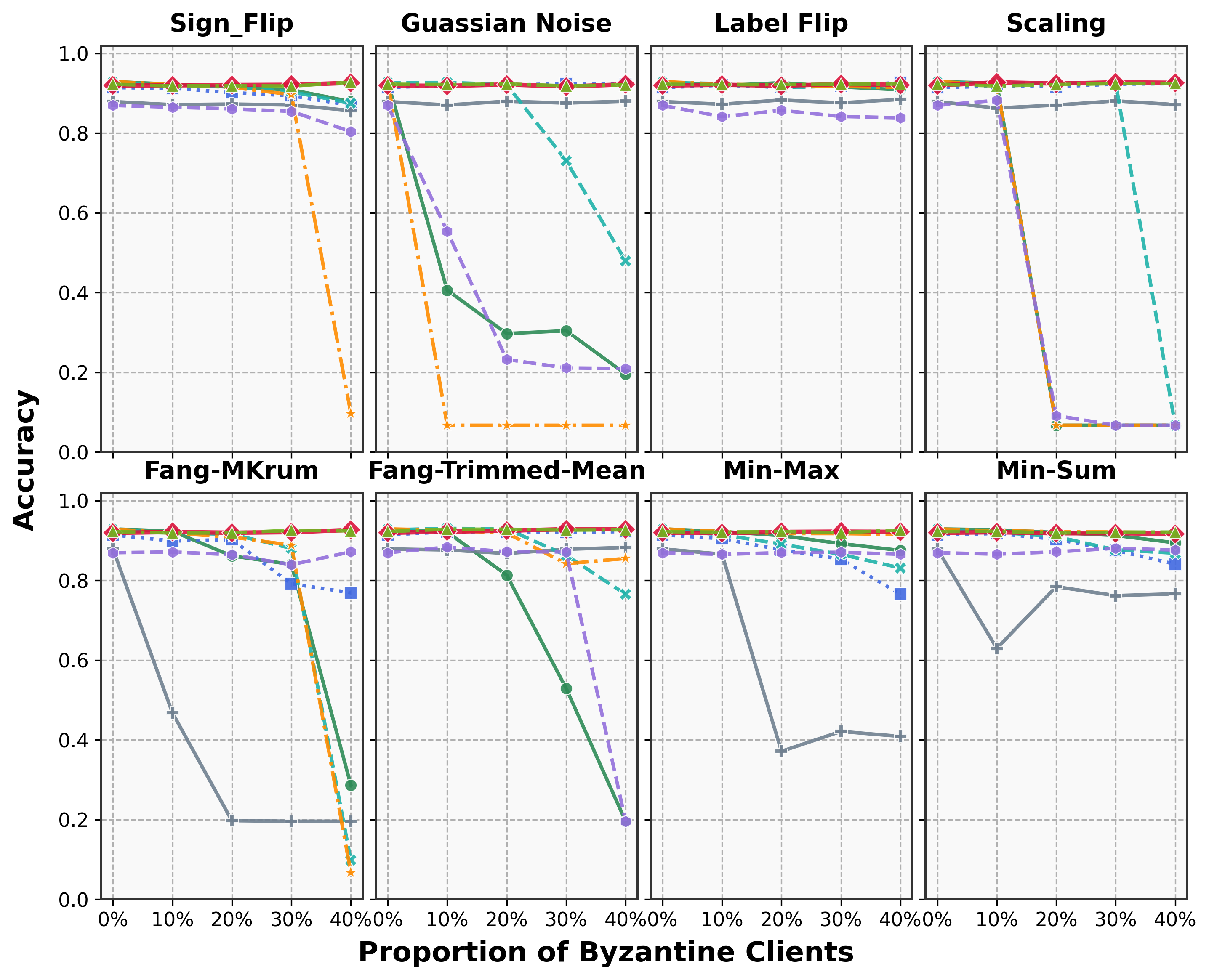}%
  \label{fig:byzantine_defence_mobilenet}}
  \vspace{0mm}
  \subfloat[CIFAR-100 with ResNet18]
  {\includegraphics[width=0.45\textwidth]{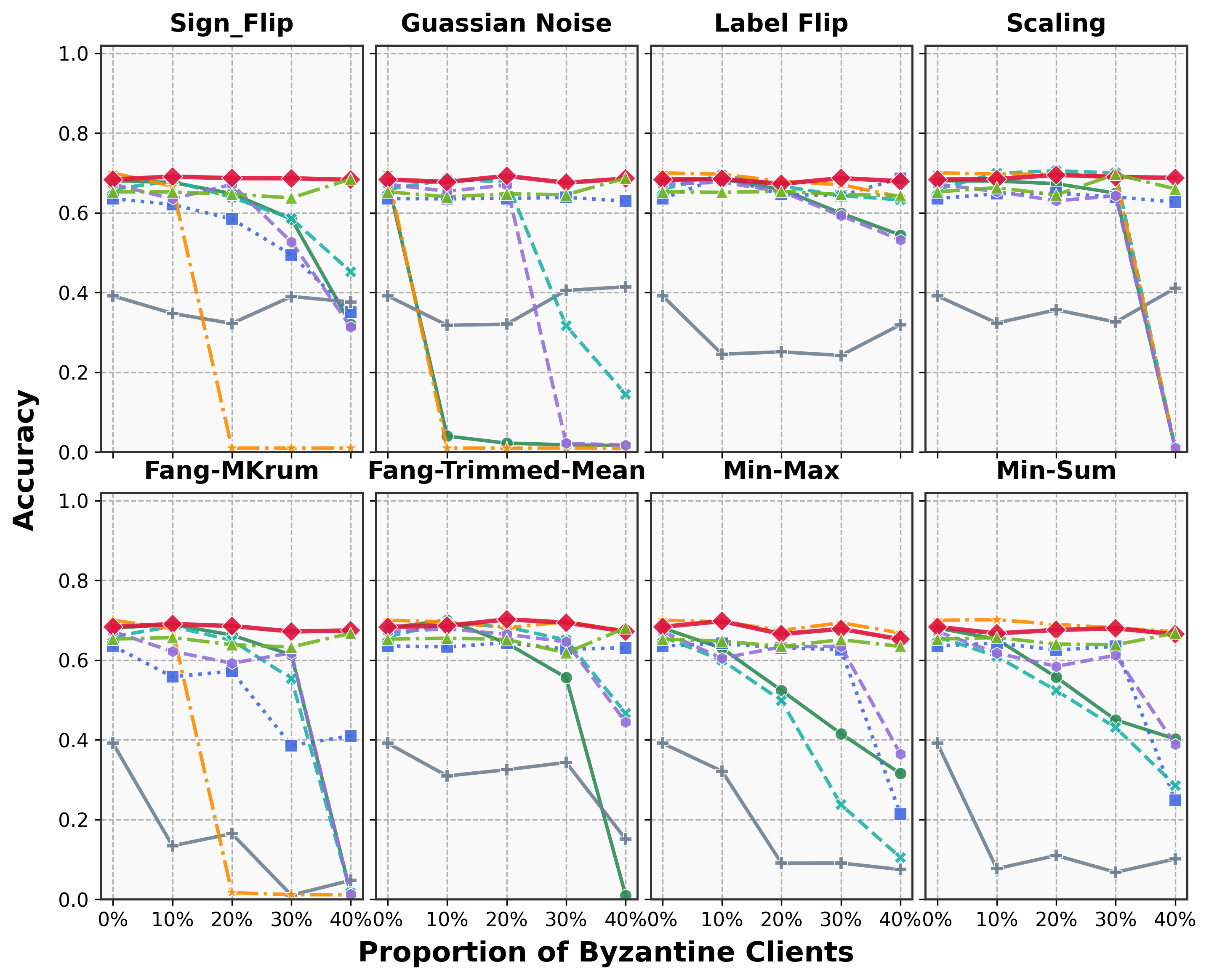}%
  \label{fig:byzantine_defence_resnet18}}
  \vspace{-1mm}
  \caption{Byzantine defence performance under different attack scenarios with varying numbers of Byzantine clients. We evaluate the robustness of our scheme by comparing the model accuracy across different attack scenarios with increasing percentages of adversarial clients (from 0\% to 40\% of the total).}
  \label{fig:byzantine_defence}
\end{figure}

\subsubsection{Comparison of efficiency}
    We compared our scheme's computational and communication efficiency against baseline approaches. Based on model size, we used compression ratios of 0.01 for ResNet20, 0.001 for MobileNetV1, and 0.0001 for ResNet18 (justified in Section~\ref{sec:analysis_jl}), and included our non-compression version as a baseline.

      Figure \ref{fig:efficiency} demonstrates computational efficiency compared to baselines. FLTrust achieves the lowest computation time but offers no privacy. Among Byzantine-robust PPFL schemes, our non-compression version significantly outperforms ShieldFL (which has similar privacy guarantees), confirming the effectiveness of our combined encryption-masking approach. The compression version achieves efficiency comparable to Split Aggregation while providing stronger security. Our compression reduces computational overhead by $25 \times$ to $35 \times$ compared to the non-compression version across different architectures, and substantially outperforms ShieldFL.

\begin{figure}[!t]
  \centering
  \subfloat[Computational overhead.]
  {\includegraphics[width=0.23\textwidth]{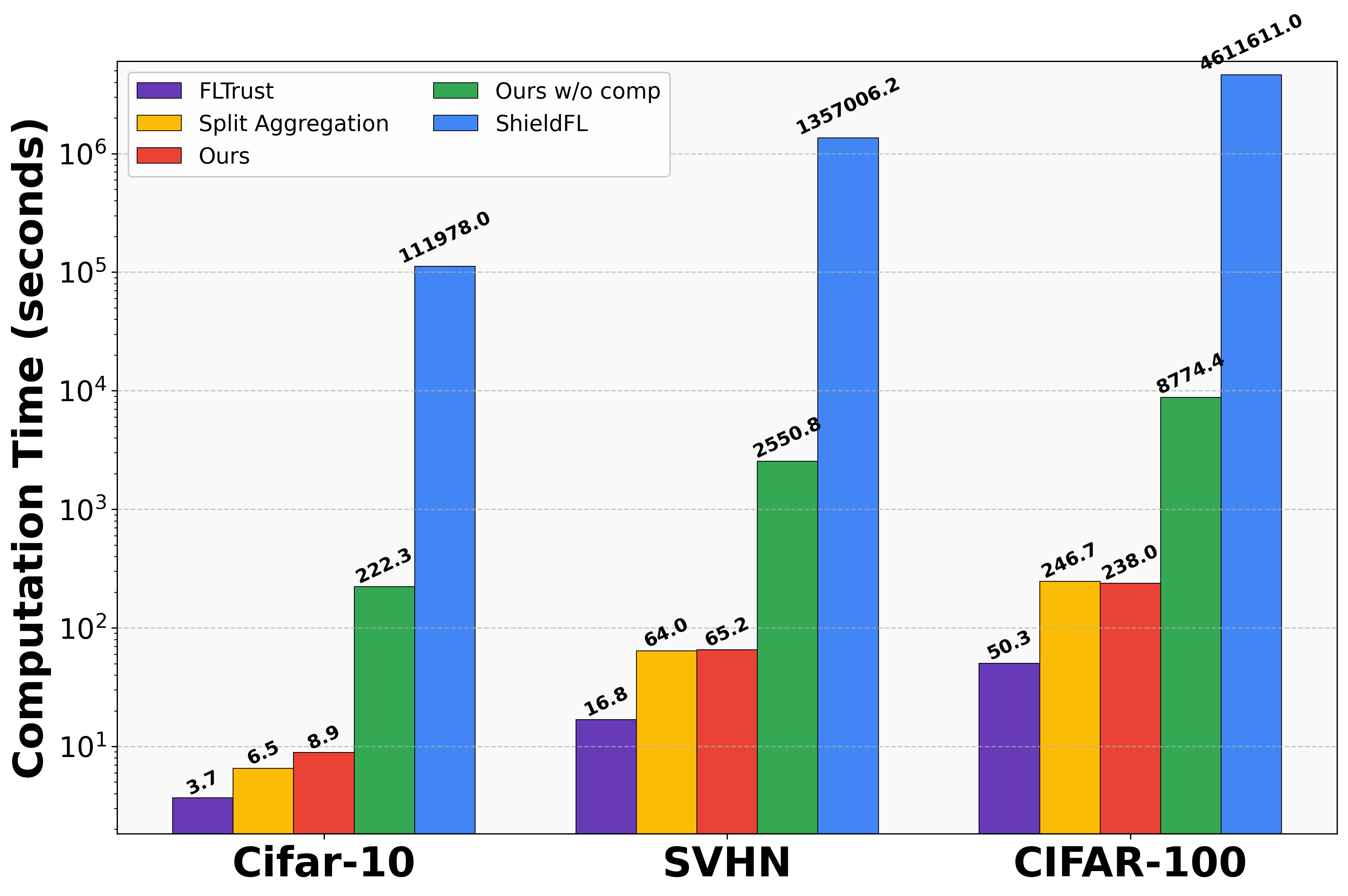}%
  \label{fig:efficiency}}
  \hfil
  \subfloat[Communication overhead.]
  {\includegraphics[width=0.23\textwidth]{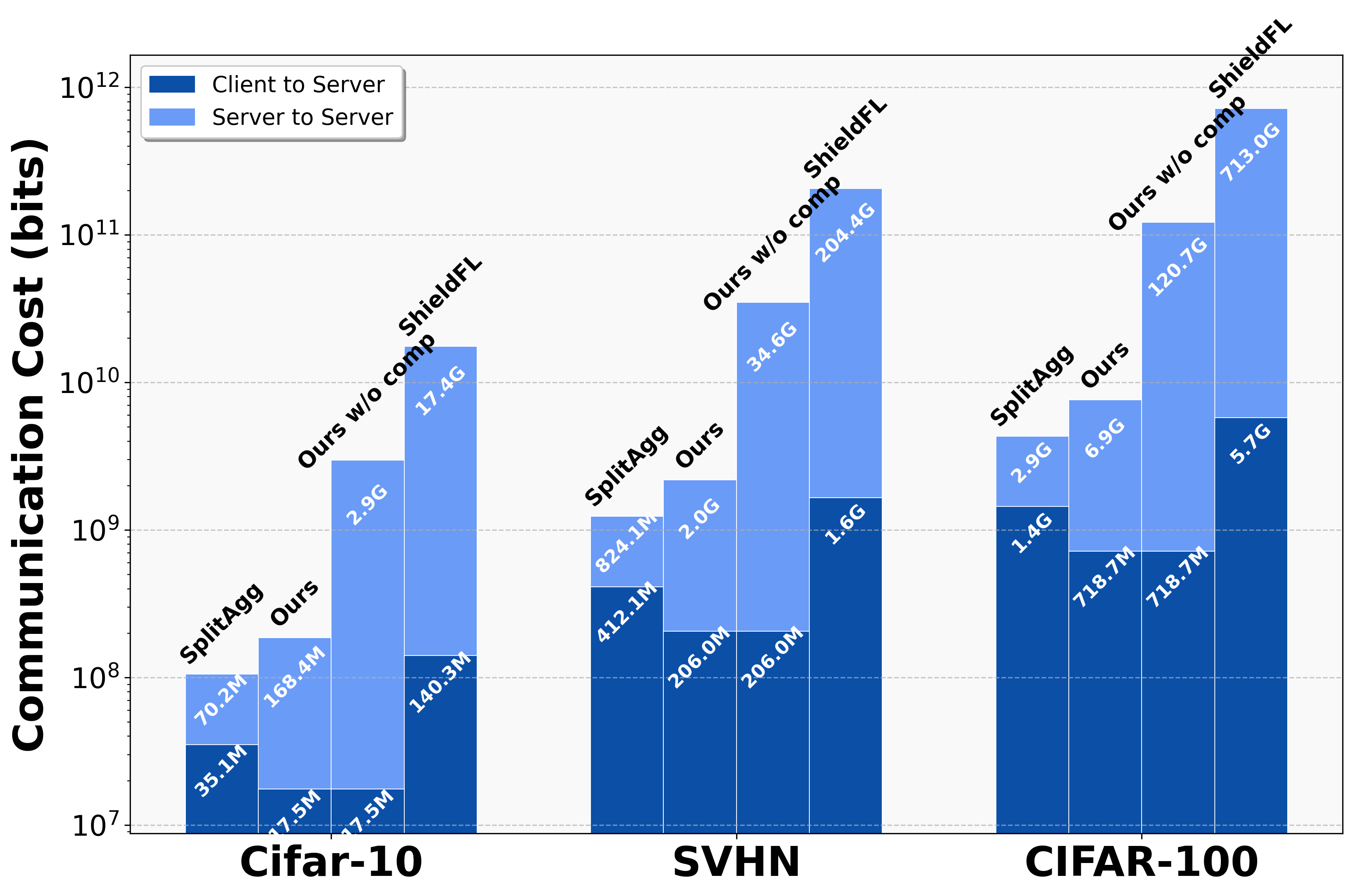}%
  \label{fig:comm}}
  \caption{Efficiency comparison showing (a) per-round processing time (seconds) with varying compression ratios (1.0 to 0.0001) across different model architectures under 40\% Byzantine client and (b) data transmission volume (bits) at different compression ratios across different datasets.}
  \label{fig:efficiency_comm}
\end{figure}

Figure \ref{fig:comm} illustrates our scheme's communication overhead compared to baselines. Our approach significantly reduces data transmission across all communication paths, requiring only half the overhead of SMC-based techniques like Split Aggregation and substantially less than ShieldFL for client-to-server communication. More importantly, our compression technique reduces server-to-server communication by approximately $17 \times$ compared to our non-compression version and orders of magnitude lower than ShieldFL.

\subsubsection{Impact of compression ratios on Accuracy and Efficiency}

\begin{figure}[!t]
  \centering
  \subfloat[Model accuracy]
  {\includegraphics[width=0.23\textwidth]{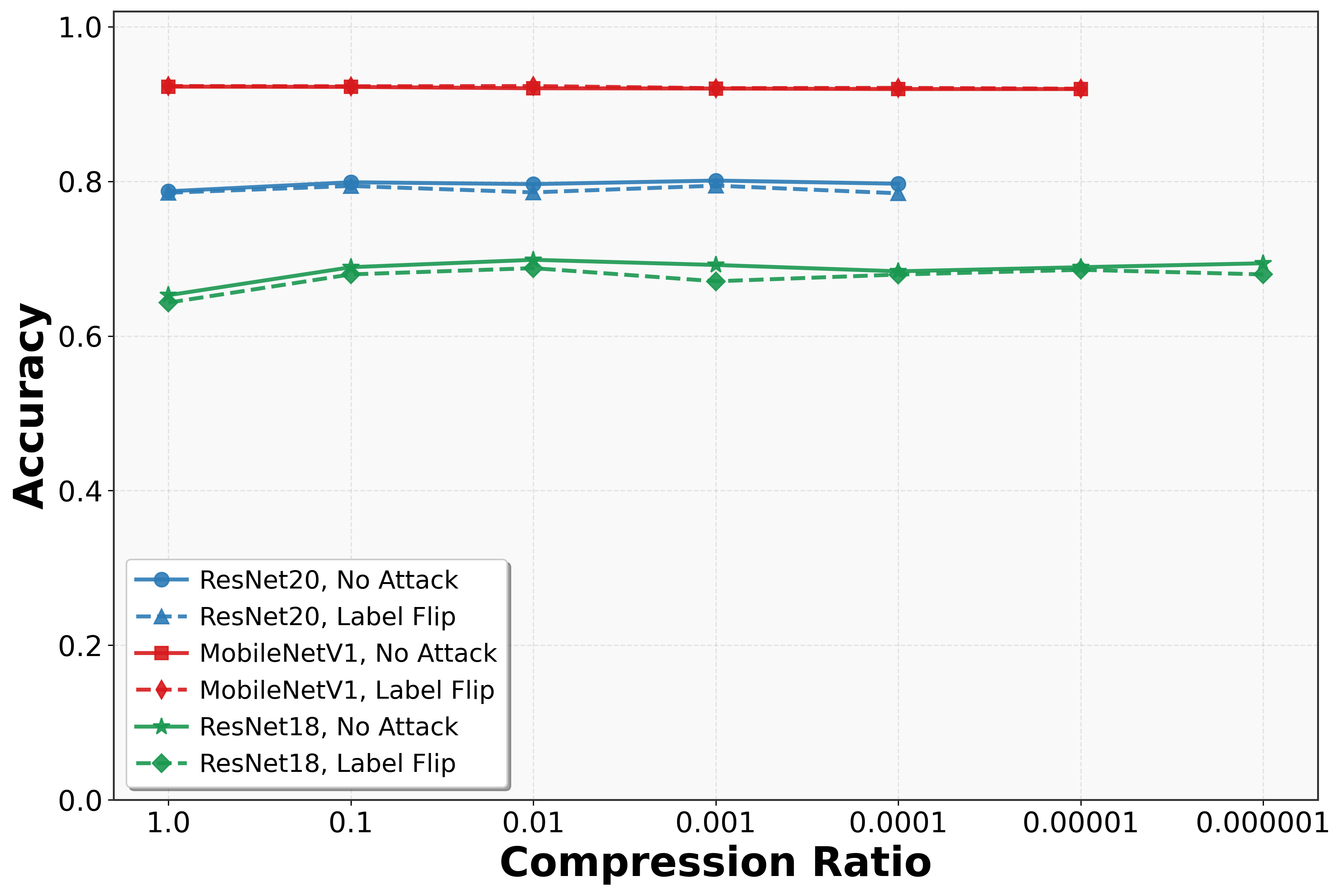}%
  \label{fig:cr_acc}}
  \hfil
  \subfloat[Processing time]
  {\includegraphics[width=0.23\textwidth]{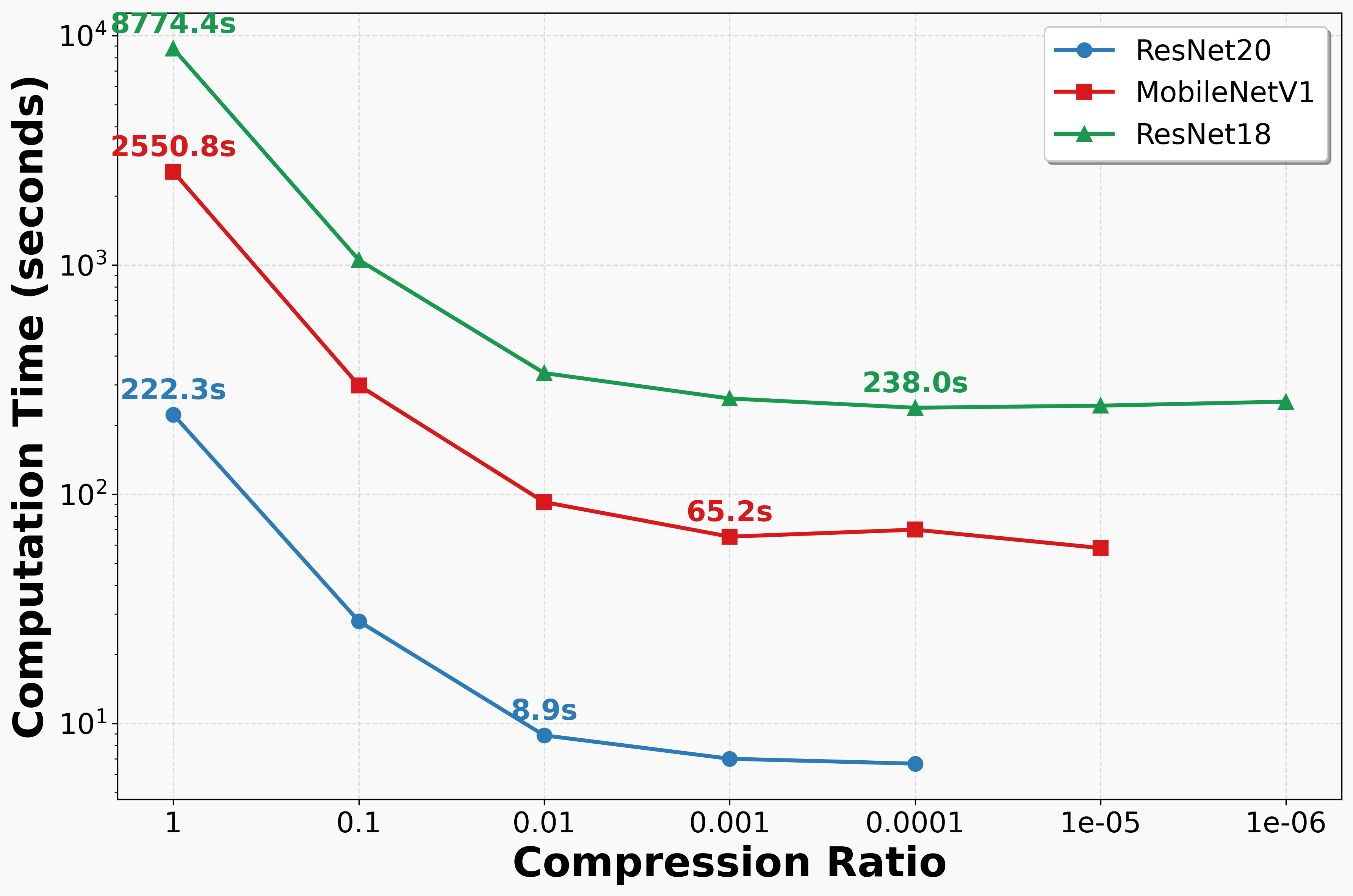}%
  \label{fig:cr_time}}
  \caption{Impact of compression ratios on model accuracy and computational efficiency under different compression ratios with 40\% Byzantine clients. (a) Model accuracy remains stable even at extreme compression (0.0001). (b) Processing time per round decreases dramatically with increasing compression.}
  \label{fig:compression_analysis}
\end{figure}

We evaluated our dimension compression technique across compression ratios from $1.0$ to $0.0001$ with $40\%$ Byzantine clients. Figure \ref{fig:cr_acc} shows that model accuracy remains stable across all datasets even at extreme compression, validating our theoretical analysis in Section \ref{sec:analysis_jl} that JL transformations preserve geometric relationships essential for Byzantine defense.

Figure \ref{fig:cr_time} demonstrates significant computational benefits, with per-round processing time decreasing dramatically as compression increases. At extreme ratios, our method achieves $25 \times$ to $35 \times$ reduction in computational overhead across different model architectures while maintaining defense capabilities comparable to uncompressed methods, making Byzantine-robust PPFL practical even for resource-constrained environments.

\section{Conclusion} \label{sec:conclusion}
We proposed a novel Byzantine-robust PPFL scheme that effectively balances privacy preservation, Byzantine robustness, and computational efficiency. Our approach combines additive masking with homomorphic encryption in a dual-server architecture, applying encryption only to Byzantine defense operations. To address computation bottlenecks, we introduced JL transformation-based dimension compression, reducing complexity from $O(dn)$ to $O(kn)$ where $k \ll d$.

Experimental results show our approach maintains model accuracy comparable to non-private FL while defending against up to 40\% Byzantine clients. The compression technique achieves $25 \times \sim 35 \times$ computational overhead reduction and $17 \times$ communication overhead reduction compared to our non-compression version, with order-of-magnitude improvements over ShieldFL, making Byzantine-robust PPFL practical for large-scale deployments.

\bibliographystyle{IEEEtran}
\bibliography{references}

\begin{thebibliography}{10}
\providecommand{\url}[1]{#1}
\csname url@samestyle\endcsname
\providecommand{\newblock}{\relax}
\providecommand{\bibinfo}[2]{#2}
\providecommand{\BIBentrySTDinterwordspacing}{\spaceskip=0pt\relax}
\providecommand{\BIBentryALTinterwordstretchfactor}{4}
\providecommand{\BIBentryALTinterwordspacing}{\spaceskip=\fontdimen2\font plus
\BIBentryALTinterwordstretchfactor\fontdimen3\font minus
  \fontdimen4\font\relax}
\providecommand{\BIBforeignlanguage}[2]{{%
\expandafter\ifx\csname l@#1\endcsname\relax
\typeout{** WARNING: IEEEtran.bst: No hyphenation pattern has been}%
\typeout{** loaded for the language `#1'. Using the pattern for}%
\typeout{** the default language instead.}%
\else
\language=\csname l@#1\endcsname
\fi
#2}}
\providecommand{\BIBdecl}{\relax}
\BIBdecl

\bibitem{ShieldFL}
Z.~Ma, J.~Ma, Y.~Miao, Y.~Li, and R.~H. Deng, ``Shieldfl: Mitigating model
  poisoning attacks in privacy-preserving federated learning,'' \emph{IEEE
  Transactions on Information Forensics and Security}, vol.~17, pp. 1639--1654,
  2022.

\bibitem{fltrust}
X.~Cao, M.~Fang, J.~Liu, and N.~Z. Gong, ``Fltrust: Byzantine-robust federated
  learning via trust bootstrapping,'' in \emph{ISOC Network and Distributed
  System Security Symposium (NDSS)}, 2021.

\bibitem{FedAvg}
B.~McMahan, E.~Moore, D.~Ramage, S.~Hampson, and B.~A.~y. Arcas,
  ``{Communication-Efficient Learning of Deep Networks from Decentralized
  Data},'' in \emph{Proceedings of the 20th International Conference on
  Artificial Intelligence and Statistics}, ser. Proceedings of Machine Learning
  Research, A.~Singh and J.~Zhu, Eds., vol.~54.\hskip 1em plus 0.5em minus
  0.4em\relax PMLR, 20--22 Apr 2017, pp. 1273--1282.

\bibitem{medicalsurvey}
\BIBentryALTinterwordspacing
H.~Guan, P.-T. Yap, A.~Bozoki, and M.~Liu, ``Federated learning for medical
  image analysis: A survey,'' \emph{Pattern Recognition}, vol. 151, p. 110424,
  2024. [Online]. Available:
  \url{https://www.sciencedirect.com/science/article/pii/S0031320324001754}
\BIBentrySTDinterwordspacing

\bibitem{FLSurvey-financial}
Y.~Shi, H.~Song, and J.~Xu, ``Responsible and effective federated learning in
  financial services: A comprehensive survey,'' in \emph{2023 62nd IEEE
  Conference on Decision and Control (CDC)}, 2023, pp. 4229--4236.

\bibitem{IOT}
D.~C. Nguyen, M.~Ding, P.~N. Pathirana, A.~Seneviratne, J.~Li, and
  H.~Vincent~Poor, ``Federated learning for internet of things: A comprehensive
  survey,'' \emph{IEEE Communications Surveys \& Tutorials}, vol.~23, no.~3,
  pp. 1622--1658, 2021.

\bibitem{DLG}
L.~Zhu, Z.~Liu, and S.~Han, \emph{Deep leakage from gradients}.\hskip 1em plus
  0.5em minus 0.4em\relax Red Hook, NY, USA: Curran Associates Inc., 2019.

\bibitem{privacyInference}
L.~Melis, C.~Song, E.~De~Cristofaro, and V.~Shmatikov, ``Exploiting unintended
  feature leakage in collaborative learning,'' in \emph{2019 IEEE Symposium on
  Security and Privacy (SP)}, 2019, pp. 691--706.

\bibitem{Membershipinference}
J.~Zhang, J.~Zhang, J.~Chen, and S.~Yu, ``Gan enhanced membership inference: A
  passive local attack in federated learning,'' in \emph{ICC 2020-2020 IEEE
  International Conference on Communications (ICC)}.\hskip 1em plus 0.5em minus
  0.4em\relax IEEE, 2020, pp. 1--6.

\bibitem{local_poisoning}
M.~Fang, X.~Cao, J.~Jia, and N.~Z. Gong, ``Local model poisoning attacks to
  byzantine-robust federated learning,'' in \emph{Proceedings of the 29th
  USENIX Conference on Security Symposium}, ser. SEC'20.\hskip 1em plus 0.5em
  minus 0.4em\relax USA: USENIX Association, 2020.

\bibitem{manipulating}
V.~Shejwalkar and A.~Houmansadr, ``Manipulating the byzantine: Optimizing model
  poisoning attacks and defenses for federated learning,'' in \emph{Network and
  Distributed System Security Symposium (NDSS)}, 2021.

\bibitem{EMKSA}
X.~Yang, Z.~Liu, X.~Tang, R.~Lu, and B.~Liu, ``An efficient and multi-private
  key secure aggregation scheme for federated learning,'' \emph{IEEE
  Transactions on Services Computing}, pp. 1--14, 2024.

\bibitem{LightSec}
J.~So, C.~He, C.-S. Yang, S.~Li, Q.~Yu, R.~E. Ali, B.~Guler, and S.~Avestimehr,
  ``Lightsecagg: a lightweight and versatile design for secure aggregation in
  federated learning,'' in \emph{Conference on Machine Learning and Systems},
  2021.

\bibitem{SAMFL}
\BIBentryALTinterwordspacing
M.~Guan, H.~Bao, Z.~Li, H.~Pan, C.~Huang, and H.-N. Dai, ``Samfl: Secure
  aggregation mechanism for federated learning with byzantine-robustness by
  functional encryption,'' \emph{Journal of Systems Architecture}, vol. 157, p.
  103304, 2024. [Online]. Available:
  \url{https://www.sciencedirect.com/science/article/pii/S1383762124002418}
\BIBentrySTDinterwordspacing

\bibitem{RFed}
Y.~Miao, X.~Yan, X.~Li, S.~Xu, X.~Liu, H.~Li, and R.~H. Deng, ``Rfed:
  Robustness-enhanced privacy-preserving federated learning against poisoning
  attack,'' \emph{IEEE Transactions on Information Forensics and Security},
  vol.~19, pp. 5814--5827, 2024.

\bibitem{LSFL}
Z.~Zhang, L.~Wu, C.~Ma, J.~Li, J.~Wang, Q.~Wang, and S.~Yu, ``Lsfl: A
  lightweight and secure federated learning scheme for edge computing,''
  \emph{IEEE Transactions on Information Forensics and Security}, vol.~18, pp.
  365--379, 2023.

\bibitem{SplitAgg}
Z.~Lu, S.~Lu, Y.~Cui, X.~Tang, and J.~Wu, ``Split aggregation: Lightweight
  privacy-preserving federated learning resistant to byzantine attacks,''
  \emph{IEEE Transactions on Information Forensics and Security}, vol.~19, pp.
  5575--5590, 2024.

\bibitem{trimmedmean}
\BIBentryALTinterwordspacing
D.~Yin, Y.~Chen, R.~Kannan, and P.~Bartlett, ``{B}yzantine-robust distributed
  learning: Towards optimal statistical rates,'' in \emph{Proceedings of the
  35th International Conference on Machine Learning}, ser. Proceedings of
  Machine Learning Research, J.~Dy and A.~Krause, Eds., vol.~80.\hskip 1em plus
  0.5em minus 0.4em\relax PMLR, 10--15 Jul 2018, pp. 5650--5659. [Online].
  Available: \url{https://proceedings.mlr.press/v80/yin18a.html}
\BIBentrySTDinterwordspacing

\bibitem{krum}
\BIBentryALTinterwordspacing
P.~Blanchard, E.~M. El~Mhamdi, R.~Guerraoui, and J.~Stainer, ``Machine learning
  with adversaries: Byzantine tolerant gradient descent,'' in \emph{Advances in
  Neural Information Processing Systems}, I.~Guyon, U.~V. Luxburg, S.~Bengio,
  H.~Wallach, R.~Fergus, S.~Vishwanathan, and R.~Garnett, Eds., vol.~30.\hskip
  1em plus 0.5em minus 0.4em\relax Curran Associates, Inc., 2017. [Online].
  Available:
  \url{https://proceedings.neurips.cc/paper_files/paper/2017/file/f4b9ec30ad9f68f89b29639786cb62ef-Paper.pdf}
\BIBentrySTDinterwordspacing

\bibitem{PEFL}
X.~Liu, H.~Li, G.~Xu, Z.~Chen, X.~Huang, and R.~Lu, ``Privacy-enhanced
  federated learning against poisoning adversaries,'' \emph{IEEE Transactions
  on Information Forensics and Security}, vol.~16, pp. 4574--4588, 2021.

\bibitem{DefendFL}
J.~Liu, X.~Li, X.~Liu, H.~Zhang, Y.~Miao, and R.~H. Deng, ``Defendfl: A
  privacy-preserving federated learning scheme against poisoning attacks,''
  \emph{IEEE Transactions on Neural Networks and Learning Systems}, pp. 1--14,
  2024.

\bibitem{Li2024}
X.~Li, X.~Yang, Z.~Zhou, and R.~Lu, ``Efficiently achieving privacy
  preservation and poisoning attack resistance in federated learning,''
  \emph{IEEE Transactions on Information Forensics and Security}, vol.~19, pp.
  4358--4373, 2024.

\bibitem{LSFL-erratum}
J.~Wu, W.~Zhang, and F.~Luo, ``On the security of “lsfl: A lightweight and
  secure federated learning scheme for edge computing”,'' \emph{IEEE
  Transactions on Information Forensics and Security}, vol.~19, pp. 3481--3482,
  2024.

\bibitem{PEFL-erratum}
T.~Schneider, A.~Suresh, and H.~Yalame, ``Comments on “privacy-enhanced
  federated learning against poisoning adversaries”,'' \emph{IEEE
  Transactions on Information Forensics and Security}, vol.~18, pp. 1407--1409,
  2023.

\bibitem{HE}
R.~L. Rivest and M.~L. Dertouzos, ``{ON DATA BANKS AND PRIVACY
  HOMOMORPHISMS},'' 1978.

\bibitem{HEsurvey}
\BIBentryALTinterwordspacing
M.~Albrecht, M.~Chase, H.~Chen, J.~Ding, S.~Goldwasser, S.~Gorbunov, S.~Halevi,
  J.~Hoffstein, K.~Laine, K.~Lauter, S.~Lokam, D.~Micciancio, D.~Moody,
  T.~Morrison, A.~Sahai, and V.~Vaikuntanathan, ``Homomorphic encryption
  standard,'' Cryptology {ePrint} Archive, Paper 2019/939, 2019. [Online].
  Available: \url{https://eprint.iacr.org/2019/939}
\BIBentrySTDinterwordspacing

\bibitem{Paillier}
P.~Paillier, ``Public-key cryptosystems based on composite degree residuosity
  classes,'' in \emph{Advances in Cryptology --- EUROCRYPT '99}, J.~Stern,
  Ed.\hskip 1em plus 0.5em minus 0.4em\relax Berlin, Heidelberg: Springer
  Berlin Heidelberg, 1999, pp. 223--238.

\bibitem{JL-lemma}
\BIBentryALTinterwordspacing
J.~Fedoruk, B.~Schmuland, J.~Johnson, and G.~Heo, ``Dimensionality reduction
  via the johnson---lindenstrauss lemma: theoretical and empirical bounds on
  embedding dimension,'' \emph{J. Supercomput.}, vol.~74, no.~8, p.
  3933–3949, Aug. 2018. [Online]. Available:
  \url{https://doi.org/10.1007/s11227-018-2401-y}
\BIBentrySTDinterwordspacing

\bibitem{howtosimulate}
\BIBentryALTinterwordspacing
Y.~Lindell, ``How to simulate it,'' in \emph{Advances in Cryptology - EUROCRYPT
  2017}.\hskip 1em plus 0.5em minus 0.4em\relax Springer, 2017, pp. 1--25.
  [Online]. Available:
  \url{https://link.springer.com/chapter/10.1007/978-3-662-46803-6_18}
\BIBentrySTDinterwordspacing

\bibitem{paszke2019pytorch}
A.~Paszke, S.~Gross, F.~Massa, A.~Lerer, J.~Bradbury, G.~Chanan, T.~Killeen,
  Z.~Lin, N.~Gimelshein, L.~Antiga \emph{et~al.}, ``Pytorch: An imperative
  style, high-performance deep learning library,'' in \emph{Advances in Neural
  Information Processing Systems (NeurIPS)}, vol.~32.\hskip 1em plus 0.5em
  minus 0.4em\relax Curran Associates, Inc., 2019, pp. 8024--8035.

\bibitem{svhn}
\BIBentryALTinterwordspacing
Y.~Netzer, T.~Wang, A.~Coates, A.~Bissacco, B.~Wu, and A.~Y. Ng, ``Reading
  digits in natural images with unsupervised feature learning,'' in \emph{NIPS
  Workshop on Deep Learning and Unsupervised Feature Learning 2011}, 2011.
  [Online]. Available:
  \url{http://ufldl.stanford.edu/housenumbers/nips2011_housenumbers.pdf}
\BIBentrySTDinterwordspacing

\bibitem{cifar}
A.~Krizhevsky, I.~Sutskever, and G.~E. Hinton, ``Imagenet classification with
  deep convolutional neural networks,'' \emph{Commun. ACM}, vol.~60, no.~6, p.
  84–90, may 2017.

\bibitem{mobilenet}
A.~G. Howard, M.~Zhu, B.~Chen, D.~Kalenichenko, W.~Wang, T.~Weyand,
  M.~Andreetto, and H.~Adam, ``{MobileNets: Efficient Convolutional Neural
  Networks for Mobile Vision Applications},'' 4 2017.

\end{thebibliography}

\end{document}